\newcommand{\B}{\operatorname{B}}
\newcommand{\BETA}{\operatorname{Beta}}
\newcommand{\BINOMIAL}{\operatorname{Binom}}
\newcommand{\Exp}{\operatorname{Exp}}
\newcommand{\expo}{\operatorname{e}}
\newcommand{\GAMMA}{\operatorname{Gamma}}
\newcommand{\Geom}{\operatorname{Geom}}
\newcommand{\Med}{\operatorname{Med}}
\newcommand{\Poisson}{\operatorname{Poisson}}
\newcommand{\diff}{\ \mathrm{d}}
\newcommand{\ignore}[1]{}
\newtheorem{theorem}{Theorem}
\newtheorem{claim}[theorem]{Claim}
\newtheorem{corollary}[theorem]{Corollary}
\newtheorem{lemma}[theorem]{Lemma}
\newtheorem{proposition}[theorem]{Proposition}
\theoremstyle{definition}
\newtheorem{case}{Case}
\newtheorem{remark}[theorem]{Remark}
\numberwithin{equation}{section}
\numberwithin{theorem}{section}
\begin{document}
\title{A Colonel Blotto Gladiator Game}
\author{Yosef Rinott%
\thanks{Partially supported by the
Israel Science Foundation grant No. 473/04 }\\
   Department of Statistics\\
   and Center for the Study of Rationality\\
   Hebrew University of Jerusalem\\
   Mount Scopus\\
   Jerusalem 91905, Israel\\
   \texttt{rinott@mscc.huji.ac.il}
   \and
   Marco Scarsini\\
   Dipartimento di Economia e Finanza\\
   LUISS\\
   Viale Romania 12\\
   I--00197 Roma, Italy\\
   \texttt{marco.scarsini@luiss.it}
   \and
   Yaming Yu\\
   Department of Statistics\\
   University of California, Irvine\\
   CA 92697-1250, USA\\
   \texttt{yamingy@uci.edu}
   }

\date{\today}

\maketitle

\thispagestyle{empty}

\pagebreak

\begin{abstract}

We consider a stochastic version of the well-known Blotto game, called the \emph{gladiator game}. In this zero-sum allocation game two teams of gladiators engage in a sequence of one-to-one fights in which the probability of winning is a function of the gladiators' strengths. Each team's strategy consist the allocation of its total strength among its gladiators.  We find the Nash equilibria and the value of this class of games and show how they depend on the total strength of teams and the number of gladiators in each team. To do this, we study interesting majorization-type probability inequalities concerning linear combinations of Gamma random variables. Similar inequalities have been used in models of telecommunications and research and development.

\bigskip
\noindent  \emph{Keywords and phrases}: Allocation game, gladiator game,
sum of exponential random variables, Nash equilibrium, probability inequalities, unimodal distribution.

\bigskip
\noindent \emph{MSC 2000 subject classification}: Primary 60E15, 91A05;
secondary 91A60.

\bigskip
\noindent \emph{OR/MS subject classification}: Primary: games/group decisions--noncooperative; secondary: probability--distribution comparisons.

\end{abstract}

\section{Introduction}\label{se:intro}

\citet{Bor:CRAS1921} proposed a game, later dubbed \emph{Colonel Blotto game} by \citet{GroWag:RAND1950}.
In this game Colonel Blotto and his enemy each have a given (possibly unequal) amount of resources, that have to be allocated to $n$ battlefields. The side that allocates more resources to   field $j$ is the winner in this field and  gains a positive amount $a_{j}$ which the other side loses. The war is won by the army that obtains the largest total gain.

The relevance of Borel precursory insight in the theory of games was discussed in an issue of \emph{Econometrica} that contains three papers by Borel, including the translation of the 1921 paper  \citep{Bor:E1953}, two notes by \citet{Fre:E1953a,Fre:E1953b} and one by \citet{von:E1953}.

\citet{BorVil:GV1938} proposed a solution to the game when the two enemies have an equal amount of resources and there are $n=3$ battlefields. The problem was then taken up by several authors, including several other famous mathematicians.
\citet{GroWag:RAND1950, Gro:RAND1950} provided the solution for a generic $n$, keeping the amount of resources equal and the gain in each battlefield constant ($a_{i}=a_{j}$).
\citet{Bla:NRLQ1954,Bla:NRLQ1958} considered the case where the payoff to Colonel Blotto in each battlefields is an increasing function of his resources and a decreasing function of his enemy's resources. \citet{Bel:SIAMR1969} showed the use of dynamic programming to solve the Blotto game.
\citet{ShuWeb:NRLQ1981} studied a more complex model where there exist complementaries among the fields being defended. In this case the total payoff depends on the relative value of capturing various configurations of targets.
\citet{Rob:ET2006} used $n$-copulas to determine the mixed equilibrium of the game under general conditions on the amount of resources for each player. His analysis is based on an interesting analogy with the theory of all-pay auctions (see also  \citet{Wei:mimeo2005} for the equilibrium of the game and \citet{SahPer:ET2006} for the connection between all-pay auctions and allocation of electoral promises).

\citet{Har:IJGT2008} considered a discrete version of the Blotto game, where player A has $A$ alabaster marbles and player B has $B$ black marbles. The players are to distribute their marbles into $K$ urns. One urn is chosen at random and the player with the largest number of marbles in the chosen urn wins the game.
In another version of the game, called \emph{Colonel Lotto game}, each player has $K$ urns where she can distribute her marbles. Two urns (one for each player) are chosen at random and the urn with the larger number of marbles determines the winner. The discrete Colonel Blotto game and the Colonel Lotto game have the same value. In a third version, called \emph{General Lotto game}, given $a,b>0$, player A chooses  a nonnegative integer-valued random variable $X$ with expectation $\mathbb{E}[X] = a$ and player B chooses  a nonnegative integer-valued random variable $Y$ with expectation $\mathbb{E}[Y] = b$. The payoff for A is $\mathbb{P}(X >Y)-\mathbb{P}(X <Y)$, where $X$ and $Y$ are assumed independent. The value of the game and the optimal strategies are determined.

Other authors who dealt with the Blotto game and its applications include, for instance,  \citet{Tuk:E1949,  SioWol:CTG3PUP1957, Fri:OR1958, CooRes:SIAMR1967, Pen:OR1971, Heu:TCS2001,  Kva:JET2007, AdaMat:EL2009,   Pow:GEB2009, GolPag:PC2009} and many more.
We refer to  \citet{KovRob:CESIFO2010, ChoKovShe:CESIFO2010}  for some history of the Colonel Blotto game and a good list of references.

In this paper we deal with a stochastic version of the Colonel Blotto game, called \emph{gladiator game} by \citet{KamLukNel:AJS1984}.
In their model two teams of gladiators engage in a sequence of one-to-one fights.  Each gladiator has a strength parameter.  When two gladiators fight, the ratio of their strengths determines the odds of winning.  The loser dies and the winner retains his strength and is ready for a new duel. The team that is wiped out loses. Each team chooses once and for all at the beginning of the game the order in which gladiators go to the arena.

We construct a zero-sum two-team game where each team also has to allocate a fixed total strength among its players.  The payoff is linear in the probability of winning.  We find the Nash equilibria and compute the value of the game.  The main results are:
\begin{enumerate}[(i)]
\item the order according to which gladiators fight has no relevance, moreover knowing the order chosen by the opponent team does not provide any advantage;

\item the stronger team always splits its strength uniformly among its gladiators, whereas the weaker team splits the strength uniformly among a subset of its gladiators;

\item
when the two teams have roughly equal total strengths, the optimal strategy for the weaker team is to divide its total strength equally among all its members;

\item
when the total strength of one team is much larger than that of the other,  the weaker team should concentrate all the strength on a single member.
\end{enumerate}

\citet{DeSDeMDeB:DAM2006} consider a dice game that has some analogies with ours. Both players can choose one of many dice having $n$ faces and such that the total number of pips on the faces of the die is $\sigma$.  The two dice are tossed and the player with the highest score wins a dollar.

The model described below for the probability that gladiator $i$ defeats $j$, is equivalent, with different parametrization,  to the well-known Rasch model in educational statistics, \citep{Ras:UChicagoP1960}, in which the probability of correct response of subject $i$ to item $j$ is $\expo^{\alpha_i-\beta_j}/(1+\expo^{\alpha_i-\beta_j})$ \citep[see][for a recent mathematical study of Rasch models]{Lau:RMA2008}.
A similar model has been used also in the theory of contests proposed by \citet{Tul:TAMUP1980}, as will be described in Section~\ref{se:extensions}.

Finding the Nash equilibria of the gladiator game involves an analysis of the probability of winning. The key step is a result in \citet{KamLukNel:AJS1984} that translates the calculation of this probability into an inequality involving the sum of independent but not necessarily identically distributed exponential random variables.

The main theorems are demonstrated through interesting and hard probability inequalities, whose proofs are of independent interest and turned out to be more complicated than expected.  Much of the paper consists of these proofs. We rely on \citet{SzeBak:PTRF2003} for some of the technical machinery.  The problem is cast as a minimization problem involving convolutions of exponential variables and is solved by perturbation arguments.  A key identity, derived using Laplace transforms, directs our perturbation arguments to the analysis of the modal location of Gamma convolutions.

Our inequalities are related to majorization type inequalities for probabilities of the form $\mathbb{P}(Q <t)$, where $Q$ is a linear combinations of Exponential or Gamma variables, that appear in \citet{BocDiaHufPer:CJS1987,DiaPer:IMSLN1990,SzeBak:PTRF2003} and in
\citet{Tel:ETT1999, JorBoc:ITW2003, AbbHuaTel:ArXiv2011}.
The motivation in the last three papers, and numerous others, is the performance of some wireless systems that depends on the coefficients of the linear combination  $Q$.
For stochastic comparisons between such linear combinations see \citet{Yu:AAP2008, Yu:B2011}  and references therein.

Linear combinations of exponential variables appear in various other applications. For instance \citet{LipMcC:RANDJE1987} consider a two-firm model in which learning is stochastic and the research race is divided into a finite number $N$ of stages, each having an exponential completion
date. The invention is discovered at the completion of the $N$-th stage. If the exponential times for one firm have parameters that can be controlled by the firms subject to constraints, then our results apply to the problem of best response and equilibrium allocation strategies for such races.

Finally, it is well known that the first passage time from $0$ to $N$ of a birth and death process on the positive integers is distributed as a linear combination of exponential random variables, with coefficients determined by the eigenvalues of the process' generator. For a clear statement, a probabilistic proof, and further references see
\citet{DiaMic:JTP2009}.  This allows one to consider R\&D type races in which one can also move backwards, and applies, for example, to the study of queues, where one compares the time until different systems reach a given queue size.

The paper is organized as follows.
In Section~\ref{se:model} we describe the model.  In Section~\ref{se:main} we determine the  Nash equilibria and the value of the game for different values of the parameters.  Section~\ref{se:inequalities} contains the main probability inequalities used to compute the equilibria. Section~ \ref{se:proofs} is devoted to the proofs of the main results. Section~\ref{se:monotonicity} deals with some monotonicity properties, that follow from our main result and have some interest \emph{per se}. Finally Section~\ref{se:extensions} contains some extensions and open problems.

\section{The model}\label{se:model}

We formalize the model described in the Introduction.
Two teams of gladiators fight each other according to the following rules. Team $A$ is an ordered set $\{A_{1}, \dots, A_{m}\}$ of $m$ gladiators and team $B$ is an ordered set $\{B_{1}, \dots, B_{n}\}$ of  $n$ gladiators. The numbers $m,n$ and the orders of the gladiators in the two teams are exogenously given.
At any given time, only two gladiators fight, one for each team. At the end of each fight only one gladiator survives. In each team gladiators go to fight according to the exogenously given order. First gladiators $A_{1}$ and $B_{1}$ fight.  The winner remains in the arena and fights the following gladiator of the opposing team. Assume that for  $i<m$ and $j<n$  at some point, $A_{i}$ fights $B_{j}$. If $A_{i}$ wins, the following fight will be between $A_{i}$ and $B_{j+1}$; if $A_{i}$ loses, the following fight will be between $A_{i+1}$ and $B_{j}$. The process goes on until a team is wiped out. The other team is then proclaimed the winner. So if at some point, for some $i \le m$, gladiator $A_{i}$ fights $B_{n}$ and wins, then team $A$ is the winner. Symmetrically if, for some $j \le n$, $A_{m}$ fights $B_{j}$ and loses, then team $B$ is the winner.

Team $A$ has total strength $c_{A}$ and team $B$ has total strength $c_{B}$. The values $c_{A}$ and $c_{B}$ are exogenously given. Before fights start the coach of each team decides how to allocate the total strength to the gladiators of the team. These decisions are simultaneous and cannot be altered during the play. Let  $\boldsymbol{a}=(a_{1}, \dots, a_{m})$ and  $\boldsymbol{b}=(b_{1}, \dots, b_{n})$ be the strength vectors of team $A$ and $B$, respectively. This means that in team $A$ gladiator $A_{i}$ gets strength $a_{i}$ and in team $B$ gladiator $B_{j}$ gets strength $b_{j}$. The vectors $\boldsymbol{a}, \boldsymbol{b}$ are nonnegative and such that
\[
\sum_{i=1}^{m} a_{i} = c_{A}, \quad \sum_{j=1}^{n} b_{j} = c_{B},
\]
namely, each coach distributes all the available strength among the gladiators of his team.

When a gladiator with strength $a$ fights a gladiator with strength $b$, the first defeats the second with probability
\begin{equation}\label{eq:aoveraplusb}
\frac{a}{a+b},
\end{equation}
all fights being independent. When a gladiator wins a fight his strength remains unaltered. The rules of the play and its parameters, i.e., the teams $A$ and $B$ and the strengths  $c_{A}, c_{B}$, are common knowledge.
Call $G_{m,n}(\boldsymbol{a},\boldsymbol{b})$ the probability that team $A$ with strength vector $\boldsymbol{a}$ wins over team $B$ with strength vector $\boldsymbol{b}$.

The above model gives rise to the  zero-sum two-person game
\begin{equation}\label{eq:game}
\mathcal{G}(m,n,c_{A},c_{B}) = \langle \mathcal{A}(m,c_{A}), \mathcal{B}(n,c_{B}), H_{m,n} \rangle
\end{equation}
in which team $A$ chooses $\boldsymbol{a} \in \mathcal{A}(m,c_{A})$ and $B$ chooses $\boldsymbol{b} \in \mathcal{B}(n,c_{B})$, where
\begin{align}
\mathcal{A}(m,c_{A})&=\left\{(a_1,\dots,a_m) \in \mathbb{R}^{m}_{+}:  \sum_{i=1}^m a_i=c_{A}\right\}, \label{eq:mathcalA}\\
\mathcal{B}(n,c_{B})&=\left\{(b_1,\dots,b_n) \in \mathbb{R}^{n}_{+} :  \sum_{i=1}^n b_i=c_{B}\right\}, \label{eq:mathcalB} \\
H_{m,n} &=G_{m,n} - \frac{1}{2}. \label{eq:payoff}
\end{align}

The payoff of team $A$ is then its probability of victory $G_{m,n}(\boldsymbol{a},\boldsymbol{b})$ minus $1/2$. We subtracted $1/2$ to make the game zero-sum.

As will be shown in Remark~\ref{re:marco} below, other models with different rules of engagement for the gladiators give rise to the same zero-sum game.

\section{Main results}\label{se:main}
Consider the game $\mathcal{G}$ defined in \eqref{eq:game}. The action $\boldsymbol{a}^*$ is a best response  against $\boldsymbol{b}$  if
\[
\boldsymbol{a}^{*} \in \arg \max_{\boldsymbol{a} \in {\mathcal{A}}}H_{m,n}(\boldsymbol{a},\boldsymbol{b}).
\]

A pair of actions $(\boldsymbol{a}^*,\boldsymbol{b}^*)$ is a \emph{Nash equilibrium} of the game $\mathcal{G}$ if
\[
H_{m,n}(\boldsymbol{a},\boldsymbol{b}^{*}) \le H_{m,n}(\boldsymbol{a}^{*},\boldsymbol{b}^{*}) \le H_{m,n}(\boldsymbol{a}^{*},\boldsymbol{b}), \quad\text{for all}\ \boldsymbol{a} \in \mathcal{A}(m,c_{A})\ \text{and}\ \boldsymbol{b} \in \mathcal{B}(n,c_{B}).
\]

A pair of actions $(\boldsymbol{a}^*,\boldsymbol{b}^*)$ is a \emph{minmax solution} of the game $\mathcal{G}$ if
\[
\max_{\boldsymbol{a} \in \mathcal{A}(m,c_{A})} \min_{\boldsymbol{b} \in \mathcal{B}(n,c_{B})} H_{m,n}(\boldsymbol{a},\boldsymbol{b}) = \min_{\boldsymbol{b} \in \mathcal{B}(n,c_{B})}  \max_{\boldsymbol{a} \in \mathcal{A}(m,c_{A})} H_{m,n}(\boldsymbol{a},\boldsymbol{b}) = H_{m,n}(\boldsymbol{a}^{*},\boldsymbol{b}^{*}).
\]

Since we are dealing with a finite zero-sum game, Nash equilibria and minmax solutions coincide \citep[see, e.g.,][Proposition~22.2]{OsbRub:MITP1994}.
The quantity $H_{m,n}(\boldsymbol{a}^{*},\boldsymbol{b}^{*})$ is called the \emph{value} of the game $\mathcal{G}$.

The next theorem characterizes the structure of Nash equilibria of the game $\mathcal{G}(m,n,c_{A},c_{B})$.

\begin{theorem}\label{th:Nash}
Consider the game $\mathcal{G}(m,n,c_{A},c_{B})$ defined in \eqref{eq:game}. Assume that $c_{A} \le c_{B}$.
\begin{enumerate}[{\rm (a)}]
\item\label{it:th:Nash-a}
There exists an equilibrium strategy profile $(\boldsymbol{a}^{*}, \boldsymbol{b}^{*})$  of $\mathcal{G}$ such that for some $J \subseteq \{1, \dots, m \}$ we have
\begin{align}\label{eq:equila}
a^{*}_{i} & = c_{A}/|J| \ \text{ for }\ i  \in J, \quad a^{*}_{i} = 0 \ \text{ for }\ i \in  J^{c}, \\
b^{*}_{i} & = c_{B}/n \ \text{ for }\ i \in  \{1, \dots, n\}. \label{eq:equilb}
\end{align}
Moreover, all pure equilibria are of this form.

\item\label{it:th:Nash-b}
If
\begin{equation}\label{eq:inequalitycasimcb}
c_{B} \le \frac{n}{n-1} c_{A},
\end{equation}
then $J = \{1, \dots, m \}$, so that  $a_{1}^{*}=\dots=a_{m}^{*} = c_{A}/m$ and $b_{1}^{*}=\dots=b_{n}^{*}=c_{B}/n$.

\item\label{it:th:Nash-c}
If
\begin{equation}\label{eq:inequalitycasmallercb}
c_{B} \ge \frac{3n}{2(n-1)}c_{A},
\end{equation}
then $J = \{ i \}$, that is  $a_{i}^{*}=c_{A}$ for some $i \in \{1,\dots,m\}$ and $a_{j}^{*}=0$ for all $j \ne i$, and
 $b_{1}^{*}=\dots=b_{n}^{*}=c_{B}/n$.

\item\label{it:th:Nash-d}
Let $t_0=1.256431\cdots$ be the root of the equation $\expo^t = 1+2t$.  Then for fixed $m$, and $c_A$ and $c_B$
such that $c_B > t_0 c_A$, the same conclusion as in \eqref{it:th:Nash-c} holds if $n$ is sufficiently large.

\end{enumerate}

\end{theorem}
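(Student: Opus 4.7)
The plan is to combine the structural result of part~(a) with an asymptotic analysis as $n\to\infty$. By part~(a), every pure equilibrium has team A playing an equal split $\boldsymbol{a}^{*}=(c_A/k)\mathbf{1}_J$ on some subset $J\subseteq\{1,\dots,m\}$ with $|J|=k\ge 1$, and team B playing the uniform allocation $\boldsymbol{b}^{*}=(c_B/n,\dots,c_B/n)$, so it suffices to identify the equilibrium value of $k$ once $n$ is sufficiently large.

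Under these strategies, the Kaminsky--Luks--Nelson representation (the same one invoked throughout the paper) gives that A wins with probability $\Phi_n(k):=\mathbb{P}(c_A G_k>c_B H_n)$, where $G_k\sim\GAMMA(k,k)$ and $H_n\sim\GAMMA(n,n)$ are independent. Writing $t=c_B/c_A$ (so the hypothesis becomes $t>t_0$), the strong law of large numbers gives $H_n\to 1$ almost surely, whence for each fixed~$k$
\[
\Phi_n(k)\longrightarrow F_k(t):=\mathbb{P}(G_k>t)\quad\text{as }n\to\infty,
\]
and this convergence is uniform over the finite range $k\in\{1,\dots,m\}$.

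The heart of the proof is then the strict inequality
\[
F_k(t)<F_1(t)=\expo^{-t}\qquad\text{for all }t>t_0\text{ and all }k\ge 2.
\]
For $k=2$, the Erlang tail formula gives $F_2(t)=(1+2t)\expo^{-2t}$, so $F_2(t)<F_1(t)$ is equivalent to $\expo^t>1+2t$, which holds precisely for $t>t_0$ by the defining property of $t_0$. For $k\ge 3$, the difference $h_k:=F_k-F_1$ vanishes at $0$ and at $\infty$, is positive for small $t$ (because $G_k$ concentrates near $1$), and is eventually negative (because $\expo^{-kt}$ decays faster than $\expo^{-t}$), so it has a unique positive sign-change $s_k$; one must show $s_k<t_0$. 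I would carry out this threshold comparison using the probability inequalities developed in Section~\ref{se:inequalities}, which via perturbation arguments and Laplace-transform analysis of modal locations of Gamma convolutions provide exactly the variation-diminishing monotonicity needed.

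Granting this inequality, uniform convergence yields an $N_0=N_0(m,t)$ such that $\Phi_n(1)>\Phi_n(k)$ for every $k\in\{2,\dots,m\}$ and every $n\ge N_0$; hence A's best response to B's uniform allocation is the singleton $|J|=1$, and by part~(a) the equilibrium must take this form. The principal obstacle is the threshold bound $s_k<t_0$ for $k\ge 3$: the boundary check at $t=t_0$ (using $\expo^{t_0}=1+2t_0$) reduces to a polynomial inequality verifiable by induction on $k$, but establishing the strict inequality throughout $(t_0,\infty)$ uniformly in $k$ is the genuinely delicate point and is what forces reliance on the refined Gamma-convolution machinery of Section~\ref{se:inequalities}.
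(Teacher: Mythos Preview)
Your framework is right and matches the paper's: pass to the $n\to\infty$ limit, use part~(a) to reduce to comparing the limiting winning probabilities
\[
F_k(t)=\mathbb{P}(G_k>t),\qquad G_k\sim\GAMMA(k,k),\quad t=c_B/c_A,
\]
over $k\in\{1,\dots,m\}$, and observe that the $k=2$ case yields exactly the defining equation of $t_0$. But for $k\ge 3$ you have a genuine gap. Your appeal to ``the machinery of Section~\ref{se:inequalities}'' does not work: Theorem~\ref{th:minimizer} and the perturbation/Laplace-transform lemmata of Section~\ref{se:proofs} compare different \emph{allocations} $(a_1,\dots,a_m)$ for \emph{fixed finite} $n$; they do not give the limiting comparison $F_k(t)<F_1(t)$. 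Moreover your claimed single sign-change of $F_k-F_1$ is asserted, not proved, and even granting it you still need $s_k\le t_0$ for every $k$, which is precisely the missing content.

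The paper closes the gap by a short, self-contained computation that avoids comparing each $F_k$ to $F_1$. Writing $f(r)=\mathbb{P}(c_A G_r>r c_B)$ with $G_r\sim\GAMMA(r,1)$ (equivalently your $F_r$), one integrates by parts and substitutes $y=x/\beta-r$ to obtain
\[
f(r)-f(r+1)=\frac{(r\beta)^r\expo^{-r\beta}}{\Gamma(r+1)}\left[\int_0^1\Bigl(1+\tfrac{y}{r}\Bigr)^r\expo^{-y\beta}\,\beta\,\difff y-1\right],\qquad \beta=c_B/c_A.
\]
Since $(1+y/r)^r$ increases in $r$, the bracket is increasing in $r$; hence $f(r)>f(r+1)$ implies $f(r+1)>f(r+2)>\cdots>f(m)$. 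The single check $f(1)>f(2)$, i.e.\ $\expo^{\beta}>1+2\beta$, then gives the full chain for all $\beta>t_0$. This ``consecutive'' comparison is what makes the $k\ge3$ case immediate, and it is the key idea your argument lacks.
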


Theorem~\ref{th:Nash} shows that if a vector $(\boldsymbol{a}^{*},\boldsymbol{b}^{*})$ is an equilibrium, then so is any permutation of $\boldsymbol{a}^{*}$ or $\boldsymbol{b}^{*}$. Moreover the team with the highest total strength always divides it equally among its members, whereas the other team divides its strength equally among a subset of its members. This subset coincides with the whole team if the total strengths of the two teams are similar, and it reduces to one single gladiator if the team has a much lower strength than the other team (see Figures~\ref{fi:plotmeqn}, \ref{fi:plotfixedn}, and \ref{fi:plotfixedm}).

\begin{center}
FIGURES~\ref{fi:plotmeqn}, \ref{fi:plotfixedn},  AND  \ref{fi:plotfixedm} ABOUT HERE
\end{center}

For $n=1$, i.e., when team $B$ has a single player, equal strength is always team $A$'s best strategy.

In order to compute the value of the game $\mathcal{G}(m,n,c_{A},c_{B})$, we need the regularized incomplete beta function
\begin{equation}\label{eq:incompletebeta}
I(x,\alpha,\beta) = \frac{1}{\B(\alpha, \beta)}\int_{0}^{x} t^{\alpha-1}(1-t)^{\beta-1} \diff t,
\end{equation}
where
\[
\B(\alpha, \beta)=\int_{0}^{1} t^{\alpha-1}(1-t)^{\beta-1} \diff t = \frac{\Gamma(\alpha)\Gamma(\beta)}{\Gamma(\alpha+\beta)}.
\]
When $\alpha$ and $\beta$ are integers, then
\begin{equation}\label{eq:binombeta}
I(x,\alpha,\beta) = \sum_{j=\alpha}^{\alpha+\beta-1} \binom{\alpha+\beta-1}{j}x^{j}(1-x)^{\alpha+\beta-1-j}.
\end{equation}

For properties of incomplete beta functions see, for instance, \citet{OlvLozBoiCla:NISTHMF2010}.

\begin{theorem}\label{th:value}
Consider the game $\mathcal{G}(m,n,c_{A},c_{B})$. Assume that $c_{A} \le c_{B}$.

\begin{enumerate}[\rm (a)]
\item\label{it:th:value-a}
The value of the game is
\begin{equation}\label{eq:valuegeneral}
\frac{1}{2}-I\left(\frac{rc_{B}}{rc_{B}+nc_{A}},r,n\right),
\end{equation}
where $r$ is the number of positive $a_{i}^{*}$ in the vector $\boldsymbol{a}^{*}$. In particular

\item\label{it:th:value-b}
if \eqref{eq:inequalitycasimcb} holds, then the value of the game is
\begin{equation}\label{eq:valueequal}
\frac{1}{2}-I\left(\frac{mc_{B}}{mc_{B}+nc_{A}},m,n\right),
\end{equation}

\item\label{it:th:value-c}
if \eqref{eq:inequalitycasmallercb} holds,  then the value of the game is
\begin{equation}\label{eq:valueunequal}
\frac{1}{2}-I\left(\frac{c_{B}}{c_{B}+nc_{A}},1,n\right).
\end{equation}
\end{enumerate}
\end{theorem}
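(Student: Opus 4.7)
The plan rests on two ingredients already available: the equilibrium characterization of Theorem~\ref{th:Nash}, and the identity of \citet{KamLukNel:AJS1984} mentioned in the Introduction, which represents
\[
G_{m,n}(\boldsymbol{a},\boldsymbol{b}) = \mathbb{P}(S_A > S_B),
\]
where $S_A = \sum_{i \colon a_i > 0} X_i$ and $S_B = \sum_{j=1}^{n} Y_j$, with $X_i \sim \Exp(1/a_i)$ and $Y_j \sim \Exp(1/b_j)$ mutually independent. The rate convention ``mean equals strength'' is exactly what reproduces the pairwise win probability \eqref{eq:aoveraplusb}: for $X \sim \Exp(1/a)$ and $Y \sim \Exp(1/b)$ independent, $\mathbb{P}(X > Y) = a/(a+b)$. (A strength-$0$ gladiator corresponds to a degenerate $X_i \equiv 0$, so only the active slots in $J$ contribute.)

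For part~\eqref{it:th:value-a}, I would instantiate this identity at an equilibrium of Theorem~\ref{th:Nash}\eqref{it:th:Nash-a}. With $r := |J|$ active gladiators on team $A$ each of strength $c_A/r$, and $n$ gladiators on team $B$ each of strength $c_B/n$, the sums reduce to $S_A \sim \GAMMA(r,\, r/c_A)$ and $S_B \sim \GAMMA(n,\, n/c_B)$, independent. Since the value is independent of the chosen equilibrium in a finite zero-sum game, it suffices to evaluate $H_{m,n}$ at this profile.

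The remaining step is a short Beta-Gamma computation. Rescale to standard Gammas $\tilde S_A = (r/c_A)\, S_A \sim \GAMMA(r,1)$ and $\tilde S_B = (n/c_B)\, S_B \sim \GAMMA(n,1)$; then $\{S_A > S_B\} = \{\tilde S_A / \tilde S_B > r c_B /(n c_A)\}$, and setting $W := \tilde S_A/(\tilde S_A + \tilde S_B) \sim \BETA(r,n)$ rewrites this event as $\{W > r c_B /(r c_B + n c_A)\}$, whose probability is $1 - I\!\left(\dfrac{r c_B}{r c_B + n c_A},\, r,\, n\right)$. Subtracting $1/2$ as in \eqref{eq:payoff} delivers \eqref{eq:valuegeneral}.

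Parts~\eqref{it:th:value-b} and~\eqref{it:th:value-c} are then immediate: under \eqref{eq:inequalitycasimcb}, Theorem~\ref{th:Nash}\eqref{it:th:Nash-b} pins down $r=m$, and under \eqref{eq:inequalitycasmallercb}, Theorem~\ref{th:Nash}\eqref{it:th:Nash-c} pins down $r=1$; substituting into \eqref{eq:valuegeneral} yields \eqref{eq:valueequal} and \eqref{eq:valueunequal}, respectively. I do not foresee any substantive obstacle here --- the heavy lifting lives in Theorem~\ref{th:Nash} and in the KLN reduction, and what remains is exactly the Beta-Gamma bookkeeping above. The only thing to watch is keeping the exponential rate/mean convention straight so that \eqref{eq:aoveraplusb} is recovered in the one-on-one case.
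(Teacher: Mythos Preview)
Your proposal is correct and follows essentially the same route as the paper: invoke Theorem~\ref{th:Nash} for the equilibrium form, use Proposition~\ref{pr:kaminsky} to express $G_{m,n}$ as $\mathbb{P}(S_A>S_B)$ with $S_A\sim\GAMMA(r,r/c_A)$ and $S_B\sim\GAMMA(n,n/c_B)$, and then reduce to the regularized incomplete beta function, with parts~\eqref{it:th:value-b} and~\eqref{it:th:value-c} obtained by specializing $r=m$ and $r=1$. The only difference is cosmetic: where the paper cites \citet{CooNad:BJ2006, Coo:UTMD2008} for the Gamma-ratio/Beta identity, you write it out via $W=\tilde S_A/(\tilde S_A+\tilde S_B)\sim\BETA(r,n)$, which is fine and arguably clearer. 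One nit: the game has continuous action sets, so ``finite zero-sum game'' is not quite the right phrase for justifying uniqueness of the value --- but uniqueness of the value across equilibria holds for zero-sum games with a saddle point regardless, so the argument is unaffected.
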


In general, to compute the value of the game, one only needs to maximize (\ref{eq:valuegeneral}) over $r=1,\ldots, m$;
any maximizing $r$ gives an optimal strategy for team $A$.  Figure~\ref{fi:plotvariousr} shows the value of the game as $c_{B}$ varies. Different values of $c_{B}$ imply different numbers of positive $a_{i}^{*}$.

\begin{center}
FIGURE~\ref{fi:plotvariousr}  ABOUT HERE
\end{center}

\section{Probability inequalities}\label{se:inequalities}

We say that $X\sim \Exp(1)$ if $X$ has a standard exponential distribution, i.e., $\mathbb{P}(X > x) = \expo^{-x}$ for $x>0$.

The main theorems of this paper rely on the following result.

\begin{proposition}\label{pr:kaminsky}[\citet{KamLukNel:AJS1984}]
The probability $G_{m,n}(\boldsymbol{a},\boldsymbol{b})$ of team $A$ defeating $B$ is
\begin{equation}\label{eq:defG}
G_{m,n}(\boldsymbol{a},\boldsymbol{b}) = \mathbb{P}\left(\sum_{i=1}^m a_i X_i > \sum_{j=1}^n b_j Y_j \right),
\end{equation}
where $X_{1}, \dots, X_{m}, Y_{1}, \dots, Y_{n}$ are i.i.d. random variables, with $X_{1}\sim \Exp(1)$.
\end{proposition}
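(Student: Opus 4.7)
The plan is to couple the sequential gladiator game to a race between independent exponential clocks. To each $A_i$ assign an independent lifetime $T_i\sim \Exp(1/a_i)$ (rate $1/a_i$, mean $a_i$), and to each $B_j$ an independent $S_j\sim \Exp(1/b_j)$. Using the standard ratio identity for exponentials, $\mathbb{P}(T_i<S_j)=(1/a_i)/(1/a_i+1/b_j)=b_j/(a_i+b_j)$, so the probability that $A_i$'s clock rings first matches the probability in \eqref{eq:aoveraplusb} that $A_i$ is the one who dies in a duel with $B_j$. Thus single-fight outcomes agree in distribution with the $T_i$/$S_j$ race.

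Next, I interpret $T_i$ as the \emph{total arena time} accrued by $A_i$ — the cumulative time $A_i$ spends actually fighting before dying — and analogously $S_j$ for $B_j$. I run the game by letting a gladiator's clock tick only while he is in the arena, and declaring him dead exactly when his clock rings. The memoryless property is the key tool: whenever a new duel begins, the survivor's remaining arena time is again exponential with the original rate and independent of the past, so the new fight is probabilistically identical to a fight with fresh clocks. Consequently, the joint distribution of outcomes in the coupled race reproduces the dynamics of the gladiator game exactly.

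The decisive step is a simple accounting identity: at every instant during the game, exactly one living gladiator per team is in the arena, so the elapsed fighting time equals the cumulative arena time of team $A$ and also that of team $B$. If team $A$ is wiped out, every $T_i$ has rung, and the elapsed time is exactly $\sum_i T_i$; meanwhile team $B$'s accrued time is at most $\sum_j S_j$ (the final $B$-survivor's clock may not have rung). Therefore $\sum_i T_i \le \sum_j S_j$ on the event that team $A$ loses, and by symmetry $\sum_i T_i \ge \sum_j S_j$ on the event that team $B$ loses; since equality has probability zero, team $A$ wins if and only if $\sum_i T_i > \sum_j S_j$. Writing $T_i=a_i X_i$ and $S_j=b_j Y_j$ with $X_i,Y_j$ i.i.d. $\Exp(1)$ then gives \eqref{eq:defG}.

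The only genuine obstacle is making the coupling rigorous — one must verify that the sequence of duel outcomes generated by the pre-drawn clocks $T_i,S_j$ really does agree in law with the sequential game, which hinges on applying the memoryless property at each transition between duels. Once this is done cleanly (e.g., by induction on the number of duels, conditioning on the current pair and using independence of the unused clocks), the arena-time accounting collapses the pathwise description of the game into the single inequality $\sum a_i X_i > \sum b_j Y_j$.
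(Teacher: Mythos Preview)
Your proposal is correct and follows essentially the same argument as the paper: interpret each gladiator's strength-scaled exponential as his total fighting lifetime, use the memoryless property to see that the survivor of each duel enters the next with a fresh exponential, and conclude that team $A$ wins precisely when its total life $\sum a_i X_i$ exceeds team $B$'s total life $\sum b_j Y_j$. Your arena-time accounting is a more explicit version of what the paper compresses into a couple of sentences, but the idea is identical.
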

\begin{remark}\label{re:marco}
The implication of Proposition~\ref{pr:kaminsky} is that two vectors $\boldsymbol{a}, \boldsymbol{a}'$ of strengths that are equal up to a permutation produce the same probability of victory, that is, the same payoff function \eqref{eq:payoff}. The same holds for two vectors  $\boldsymbol{b}, \boldsymbol{b}'$.
Therefore various models, with different rules for the order in which gladiators fight, give rise to the same game  \eqref{eq:game}. This happens, for instance, in a model where the winning gladiator does not stay in the arena to fight the following opponent, but, rather, goes to the bench at the end of his team's queue, and comes back to fight when his turn comes. This happens also when, at the end of each fight, each coach chooses one of the living gladiators in his team at random and sends him to fight. Basically, provided the allocations of strength in the two teams is decided simultaneously at the beginning and is not modified throughout, any rule governing the order of descent of gladiators in the arena leads to the same game \eqref{eq:game}. This is true also for nonanticipative  rules that depend on the history of the battles so far. The key assumption for this is the fact that a winning gladiator does not lose (or gain) any strength after a victorious battle. This is parallel to the lack-of-memory property in many reliability models, and explains why the probability of winning \eqref{eq:defG} involves sums of exponential random variables.

Note that the main result (Theorem~\ref{th:Nash}) does not go through if the allocations can also be decided dynamically as battles unfold. In this case the resulting game is more complicated and optimal allocations may change according to the observed history. For instance consider the case where $c_{B}$ is slightly larger than $c_{A}$. At the beginning, suppose team $B$ spreads the strength uniformly across all its players.  If team $B$ keeps losing some battles, then it may become optimal to spread the strength among only a subset of the surviving players.

\end{remark}

The following theorem is the main tool to prove Theorem~\ref{th:Nash}.

\begin{theorem}
\label{th:minimizer}
Let $X_1,\dots, X_m$ and $Y_1,\dots, Y_n$, $m, n\geq 1$, be i.i.d. random variables with $X_{1}\sim \Exp(1)$.  For fixed $b>0$,  let $\mathcal{A}$ be as in \eqref{eq:mathcalA} and let
\[
(a_1^*,\dots, a_m^*)\in \arg \min_{\boldsymbol{a}\in \mathcal{A}(m,m)}\mathbb{P}\left(\sum_{i=1}^m a_i X_i \leq b\sum_{j=1}^n Y_j\right).
\]
Then
\begin{enumerate}[\rm (a)]
\item\label{it:th:minimizer-a}
all nonzero values among $a_1^*,\ldots, a_m^*$ are equal;

\item\label{it:th:minimizer-b}
if $m \ge (n-1)b$, then $a^*_1=\cdots =a^*_m=1$;

\item\label{it:th:minimizer-c}
if $m\leq 2(n-1)b/3$, then $a^*_i=m$ for a single $i$,\, $1\leq i\leq m$, and $a^*_j=0$, for $j\neq i$.
\end{enumerate}
\end{theorem}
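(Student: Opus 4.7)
The plan is to analyze
\[
\phi(\boldsymbol{a}) := \mathbb{P}\!\left(\sum_{i=1}^m a_i X_i \leq b \sum_{j=1}^n Y_j\right)
\]
by pairwise perturbations, reducing the global question to the modal behaviour of Gamma convolutions. For a pair $i \neq j$ with $a_i, a_j > 0$, set $T := b\sum_j Y_j$ and $R := \sum_{k \neq i,j} a_k X_k$, so that
\[
\phi(\boldsymbol{a}) = \mathbb{E}\bigl[F_{a_i, a_j}(T - R)\,\mathbf{1}_{T > R}\bigr], \qquad F_{u,v}(w) = 1 - \frac{u \expo^{-w/u} - v \expo^{-w/v}}{u - v}.
\]
A Taylor expansion of $F_{u,v}(w)$ around $u = v = a$ in the anti-diagonal direction $(+1,-1)$ produces a second-order coefficient proportional to $w^2 \expo^{-w/a}(1 - w/(3a))/a^4$, whose sign flips at the threshold $w = 3a$. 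Combined with the Laplace transform $\prod_k (1 + a_k \xi)^{-1}$ of the convolution $\sum_k a_k X_k$, this is the key identity driving the whole argument.

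For part (a) I would restrict $\phi$ to the one-parameter family $a_i(\epsilon) = a_i + \epsilon$, $a_j(\epsilon) = a_j - \epsilon$, with the other coordinates frozen. Exchangeability of $X_i$ and $X_j$ renders $g(\epsilon) := \phi(\boldsymbol{a}(\epsilon))$ symmetric about the midpoint $\epsilon_0 = (a_j - a_i)/2$ at which $a_i(\epsilon_0) = a_j(\epsilon_0) = (a_i + a_j)/2$. Using the explicit form of $F_{u,v}$ above (together with the Laplace-transform identity), I would show that $g$ has no interior critical point on the admissible interval $[-a_i, a_j]$ other than $\epsilon_0$. This forces the minimum of $g$ to be attained either at $\epsilon_0$ (so that $a_i = a_j$) or at an endpoint (so that one of $a_i, a_j$ vanishes); applying this at every pair of positive coordinates of an optimum yields the conclusion that all nonzero entries of $\boldsymbol{a}^*$ are equal.

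For parts (b) and (c), part (a) reduces the optimization to the finite set of candidates $\boldsymbol{a}^{(r)} := (m/r,\dots,m/r,0,\dots,0)$ with $r$ positive entries, $r = 1, \dots, m$. The standard Gamma/Beta-ratio identity gives
\[
\phi(\boldsymbol{a}^{(r)}) = \mathbb{P}\!\left(\tfrac{m}{r}\,\Gamma_r \leq b\,\Gamma_n\right) = I\!\left(\frac{rb}{m + rb},\, r,\, n\right),
\]
with $\Gamma_r \sim \Gamma(r)$ independent of $\Gamma_n \sim \Gamma(n)$, matching \eqref{eq:valuegeneral}. Identifying the minimizing $r$ amounts to controlling the sign of $\phi(\boldsymbol{a}^{(r+1)}) - \phi(\boldsymbol{a}^{(r)})$. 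Interpreting the step $r \mapsto r+1$ as a redistribution of mass among equal-weight coordinates ties this sign, via the Taylor expansion above, to whether the density of $T - R$ sits predominantly below or above the threshold $3m/r$. A direct computation of partial expectations of $T$ against this density converts the two mass regimes into the sharp conditions $m \geq (n-1)b$ (forcing the minimizer to $r = m$, part (b)) and $m \leq \tfrac{2}{3}(n-1)b$ (forcing it to $r = 1$, part (c)).

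The hard part, and where I expect most of the technical effort, is turning the pointwise sign information of the expansion of $F_{u,v}(w)$ into a global statement about $r \mapsto \phi(\boldsymbol{a}^{(r)})$ with the sharp constants $(n-1)b$ and $\tfrac{2}{3}(n-1)b$. This requires locating the mode of the Gamma convolution $\sum_k a_k^* X_k$ precisely and carrying out an integrated comparison with the density of $T$; for this I would rely on the modal-location and Laplace-transform machinery developed in \citet{SzeBak:PTRF2003}, supplemented by the incomplete-beta identity \eqref{eq:binombeta}.
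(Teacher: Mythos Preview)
Your overall architecture---pairwise perturbation, reduce to the finite family $\boldsymbol{a}^{(r)}$, then compare consecutive $r$---matches the paper, and you correctly identify that the Laplace-transform identity of \citet{SzeBak:PTRF2003} and the modal behaviour of Gamma convolutions are the engine. But the crucial step in part~(a) is a genuine gap.

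You assert that the one-parameter restriction $g(\epsilon)$ has \emph{no interior critical point other than $\epsilon_0$}. This is precisely the hard part, and it is not clear it is even true as stated. From the paper's key identity (Lemma~\ref{le:QQ}), $g'(\epsilon) = (a_1-a_2)\,\partial_x^2\mathbb{P}(Q+a_1Z_1+a_2Z_2\le x)|_{x=0}$ with $a_1=a_i+\epsilon$, $a_2=a_j-\epsilon$, so $g'(\epsilon)=0$ whenever \emph{either} $\epsilon=\epsilon_0$ \emph{or} $0$ is a mode of the augmented convolution $Q+a_1Z_1+a_2Z_2$. Since the law of this convolution depends only on the unordered pair $\{a_1,a_2\}$, any $\epsilon\neq\epsilon_0$ at which the mode sits at $0$ comes with a mirror image $2\epsilon_0-\epsilon$, so $g$ can have three interior critical points. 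Your symmetry argument alone does not rule this out, and neither does the explicit form of $F_{u,v}$ you wrote down.

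The paper sidesteps this by a genuinely different manoeuvre. Rather than trying to prove uniqueness for a single pair, it brings in a \emph{third} coordinate: assuming $0<a_1^*\le a_2^*<a_3^*$ at a minimizer, it compares the $(a_1,a_2)$-perturbation with the $(a_1,a_3)$-perturbation. Optimality at $\delta=0$ forces $0$ to be a (nonnegative) mode of $Q+a_1^*Z_1+a_2^*Z_2$; Lemma~\ref{le:mode} then says the mode of $Q+a_1^*Z_1+a_3^*Z_2$ is \emph{strictly} larger, which via the same identity makes the $(a_1,a_3)$-perturbation strictly improving---a contradiction. This three-coordinate trick only yields ``at most two distinct nonzero values, the smaller appearing once'' (Claim~\ref{cl:claimB}); eliminating the remaining two-value case requires a separate, fairly intricate, one-parameter analysis (Claim~\ref{cl:key}) in which the function $g(\delta)$ is computed explicitly via a Beta/$F$-ratio representation and a differential inequality for $g$ is established. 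The sharp thresholds $m\ge(n-1)b$ and $m\le\tfrac{2}{3}(n-1)b$ for parts~(b) and~(c) fall out of \emph{that same} analysis (the cases $\lambda(n-1)\le k+1$ and $\lambda(n-1)\ge k+2$), not from a separate Taylor-expansion comparison of $\phi(\boldsymbol{a}^{(r)})$ and $\phi(\boldsymbol{a}^{(r+1)})$ as you propose. Your route to the constants via ``partial expectations of $T$ against the density'' is not fleshed out and would need substantial independent work.
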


\section{Proofs of the main results}\label{se:proofs}

The long path to the proof of Theorem~\ref{th:Nash} goes through the following steps: first we provide a short proof of  Proposition~\ref{pr:kaminsky} for the sake of completeness. Then we state and prove three lemmas needed for the proof of Theorem~\ref{th:minimizer}. Then we prove Theorem~\ref{th:minimizer}, and, resorting to it, we finally prove Theorem~\ref{th:Nash}.

\begin{proof}[Proof of Proposition~\ref{pr:kaminsky}]
First note that if $X$, $Y$ are i.i.d.~random variables with $X\sim \Exp(1)$, then $\mathbb{P}(aX>bY)= a/(a+b)$. Therefore, one can see a duel between gladiators $i$ and $j$ as a competition in which the probability of winning is the probability of living longer, when their lifetimes are $a_iX$ and $b_jY$, respectively. At the end of a duel, the winner's remaining lifetime is as good as new by the memoryless property of exponential random variables, corresponding to the fact that the strength of a winner remains unaltered. The teams' total lives are $\sum_{i=1}^m a_i X_i$ and $\sum_{j=1}^n b_jY_j$, and the probability that team $A$ wins is that it lives longer, which is $G_{m,n}(\boldsymbol{a},\boldsymbol{b})$, so \eqref{eq:defG} follows.
\end{proof}

In order to prove Theorem~\ref{th:minimizer} we need several preliminary results.
Let $G_{1}, G_{2}, Z_{1}, Z_{2}$ be independent with $G_i \sim \GAMMA(u_i,1)$, $Z_i \sim \Exp(1)$, for $i=1,2$. For $u_i=0$ we define $G_i=0$ with probability 1.

 \begin{lemma} \label{le:LL}Given $a_{1}^{*},a_{2}^{*}$, set $a_1=a_1^*+\delta/u_1$ and $a_2=a_2^*-\delta/u_2$. Then
\begin{equation} \label{eq:SB}
\frac{\partial}{\partial \delta} \mathbb{P}(a_1G_1+a_2G_2\leq x) = (a_1-a_2)\frac{\partial^2}{\partial x^2} \mathbb{P}(a_1(G_1+Z_1)+a_2(G_2+Z_2)\leq x).
\end{equation}
 \end{lemma}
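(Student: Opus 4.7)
The plan is to verify \eqref{eq:SB} by computing the Laplace transforms in $x$ of the two sides and invoking uniqueness. The governing identity is $\mathbb{E}[\expo^{-saG}]=(1+sa)^{-u}$ for $G\sim\GAMMA(u,1)$, and the key algebraic observation is that differentiating in $\delta$ shifts each shape parameter $u_i$ up by one and pulls out a factor $s(a_1-a_2)$; this is exactly what is produced on the right-hand side by two independent $\Exp(1)$ convolutions combined with an extra $\partial_x$.

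First I would compute the Laplace transform of $F_V(x):=\mathbb{P}(a_1G_1+a_2G_2\le x)$. Since $F_V$ is a CDF,
\[
\mathcal{L}[F_V](s)=\frac{1}{s}(1+sa_1)^{-u_1}(1+sa_2)^{-u_2}.
\]
Using $\partial a_1/\partial\delta=1/u_1$ and $\partial a_2/\partial\delta=-1/u_2$, the exponents $u_i$ brought down by the chain rule cancel precisely the $1/u_i$ factors. Combining the two resulting terms over the common denominator $(1+sa_1)^{u_1+1}(1+sa_2)^{u_2+1}$ gives
\[
\frac{\partial}{\partial\delta}\mathcal{L}[F_V](s)=s(a_1-a_2)(1+sa_1)^{-u_1-1}(1+sa_2)^{-u_2-1}.
\]

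Next I would compute the Laplace transform of the right-hand side. Independence and $G_i+Z_i\sim\GAMMA(u_i+1,1)$ imply that the density $f_W$ of $W:=a_1(G_1+Z_1)+a_2(G_2+Z_2)$ has Laplace transform $\hat f_W(s)=(1+sa_1)^{-u_1-1}(1+sa_2)^{-u_2-1}$. Because $W$ is a sum of independent continuous variables on $[0,\infty)$, $F_W(0)=0$ and $f_W(0)=0$, so the standard rule for Laplace transforms of derivatives yields $\mathcal{L}[\partial_x^2 F_W](s)=s\hat f_W(s)$. Multiplying by $(a_1-a_2)$ reproduces exactly the expression obtained for the left-hand side.

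By uniqueness of Laplace transforms on $[0,\infty)$, the two sides of \eqref{eq:SB} therefore agree for all $x\ge 0$. The only routine checks are the interchange of $\partial/\partial\delta$ with the Laplace integral, which follows from dominated convergence given the smoothness in $\delta$ and uniform exponential control in $x$, and the vanishing of $F_W$ and $f_W$ at $0$, which is immediate from independence. I expect no real obstacle here; the main place to be careful is the bookkeeping in the $\delta$-differentiation, which is where the miraculous cancellation of the $u_i$'s takes place and makes the Laplace-transform encoding so clean.
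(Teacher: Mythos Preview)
Your proposal is correct and follows essentially the same Laplace-transform approach as the paper: both compute the transforms of the two sides, reduce the identity to the explicit algebraic verification that the $u_i$'s cancel against the $1/u_i$ derivatives of the $a_i$, and conclude by uniqueness. Your execution is slightly more streamlined (you differentiate the product directly and use $G_i+Z_i\sim\GAMMA(u_i+1,1)$, whereas the paper passes through logarithmic differentiation), but the substance is identical.
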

\begin{proof}
Let
\begin{align*}
F(x)&= \mathbb{P}(a_1G_1+a_2G_2 \le x)\\
H(x)&=\mathbb{P}(a_1G_1+a_2G_2+a_1Z_1+a_2Z_2 \le x)
\end{align*}
and let $f$ and $h$ denote the corresponding densities. Let $\mathcal{L}$ denote the Laplace transform, that is,
\[
\mathcal{L}(F)=\int_0^\infty \expo^{-tx}F(x)\diff x.
\]
Note that \eqref{eq:SB} is equivalent to
 \begin{equation}\label{eq:Lap}
\mathcal{L}\left(\frac{\partial}{\partial \delta}F(x)\right)=(a_1-a_2)\mathcal{L}\left(\frac{\partial^2}{\partial x^2}H(x)\right).
 \end{equation}
Using integration by parts we get
\[
\mathcal{L}\left(\frac{\partial^2}{\partial x^2}H(x)\right) = t\int_0^\infty \expo^{-tx}h(x)\diff x = t\ \mathbb{E}[\exp\{-t(a_1G_1+a_2G_2+a_1Z_1+a_2Z_2)\}].
\]
 For the left hand side of \eqref{eq:Lap} note that we can interchange differentiation and integration, and also that
\[
\frac{\partial}{\partial \delta}\mathcal{L}(F(x))=\mathcal{L}(F(x))\frac{\partial}{\partial \delta}\log \mathcal{L}(F(x)).
\]
Again by integration by parts we have
\[
\mathcal{L}(F(x))=\frac{1}{t}\mathcal{L}(f(x))=\frac{1}{t}\mathbb{E}[\exp\{-t(a_1G_1+a_2G_2)\}].
\]
It follows that \eqref{eq:Lap} is equivalent to
\begin{equation}\label{eq:Lap2}
\frac{1}{t}\frac{\partial}{\partial \delta}\log \mathcal{L}(f(x))= (a_1-a_2)\ t\ \mathbb{E}[\exp\{-t(a_1Z_1+a_2Z_2)\}].
 \end{equation}
Explicitly this becomes
\begin{equation}\label{eq:Lap3}
\frac{1}{t}\frac{\partial}{\partial \delta}\log[(1+a_1t)^{-u_1}(1+a_2t)^{-u_2}]= (a_1-a_2)t(1+a_1t)^{-1}(1+a_2t)^{-1}.
 \end{equation}
Using $a_1=a_1^*+\delta/u_1$, and $a_2=a_2^*-\delta/u_2$, \eqref{eq:Lap3} is verified
by a straightforward calculation.
\end{proof}
A related result to Lemma \ref{le:LL}, with a similar type of proof, appears in \citet{SzeBak:PTRF2003}.

\begin{lemma}\label{le:QQ}
Given a nonnegative vector $(a_1^*,\dots,a_m^*)$, let
\[
a_1=a_1^*+\delta/u_1,\quad a_2=a_2^*-\delta/u_2,\quad a_i=a_i^*\text{ for }3\leq i\leq m.
\]
Define
\begin{equation}\label{eq:Q}
Q(\boldsymbol{a}, \boldsymbol{u})=\sum_{i=1}^m a_i G_i - b \sum_{j=1}^n Y_j,
\end{equation}
where $(\boldsymbol{a}, \boldsymbol{u}) := (a_1,\dots, a_m, u_1,\dots, u_m)$, $G_{1}, \dots, G_{m}, Y_{1}, \dots, Y_{n}$ are independent random variables with $G_i\sim \GAMMA(u_i, 1)$, for $i=1, \dots, m$ and $Y_j \sim \Exp(1)$, for $j=1,\dots, n$.
Let $Z_i \sim \Exp(1)$, for $i=1,2$ be independent of all other variables.
Then
\begin{equation}
\label{eq:key}
 \frac{\partial}{\partial\delta} \mathbb{P}(Q(\boldsymbol{a}, \boldsymbol{u})\leq x) = (a_1-a_2)\frac{\partial^2}{\partial x^2} \mathbb{P}(Q(\boldsymbol{a}, \boldsymbol{u})+a_1 Z_1 +a_2 Z_2\leq x).
\end{equation}
\end{lemma}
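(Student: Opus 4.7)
The plan is to reduce Lemma \ref{le:QQ} to the two-variable case already established in Lemma \ref{le:LL}. Observe that only $a_1$ and $a_2$ depend on the perturbation parameter $\delta$, while the remaining summands
\[
W := \sum_{i=3}^m a_i G_i - b\sum_{j=1}^n Y_j
\]
are independent of $(G_1, G_2, Z_1, Z_2)$ and free of $\delta$. Conditioning on $W$ turns the event $\{Q(\boldsymbol{a}, \boldsymbol{u}) \leq x\}$ into $\{a_1 G_1 + a_2 G_2 \leq x - W\}$, so that
\[
\mathbb{P}(Q(\boldsymbol{a}, \boldsymbol{u}) \leq x) = \mathbb{E}\bigl[F(x - W)\bigr],
\]
where $F(y) := \mathbb{P}(a_1 G_1 + a_2 G_2 \leq y)$ is precisely the CDF treated in Lemma \ref{le:LL}.

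Next I would apply Lemma \ref{le:LL} pointwise in $W$. Writing $H(y) := \mathbb{P}(a_1(G_1+Z_1) + a_2(G_2+Z_2) \leq y)$, Lemma \ref{le:LL} yields the conditional identity
\[
\frac{\partial}{\partial \delta} F(x - W) = (a_1 - a_2)\, \frac{\partial^2}{\partial x^2} H(x - W).
\]
Interchanging the $\delta$-derivative with the expectation over $W$ then gives
\[
\frac{\partial}{\partial \delta} \mathbb{P}(Q(\boldsymbol{a}, \boldsymbol{u}) \leq x) = (a_1 - a_2)\, \mathbb{E}\!\left[\frac{\partial^2}{\partial x^2} H(x - W)\right].
\]
Since $W$ is independent of $(G_1, G_2, Z_1, Z_2)$, the distribution of $Q(\boldsymbol{a}, \boldsymbol{u}) + a_1 Z_1 + a_2 Z_2 = a_1(G_1+Z_1) + a_2(G_2+Z_2) + W$ is the convolution of the law of $a_1(G_1+Z_1) + a_2(G_2+Z_2)$ with the law of $W$, so its CDF at $x$ equals $\mathbb{E}[H(x - W)]$. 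Differentiating twice in $x$ and pulling the derivatives inside the expectation yields
\[
\mathbb{E}\!\left[\frac{\partial^2}{\partial x^2} H(x - W)\right] = \frac{\partial^2}{\partial x^2}\, \mathbb{P}\!\left(Q(\boldsymbol{a}, \boldsymbol{u}) + a_1 Z_1 + a_2 Z_2 \leq x\right),
\]
which is exactly \eqref{eq:key}.

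The only delicate point is justifying the two interchanges of differentiation with expectation. This is routine: the extra $a_1 Z_1 + a_2 Z_2$ inside $H$ ensures that the density of $a_1(G_1+Z_1) + a_2(G_2+Z_2)$ is $C^1$ with a bounded and integrable derivative, while the $\delta$-derivative of $F$ is controlled uniformly in a neighborhood of the base point by the explicit Laplace transform computation of Lemma \ref{le:LL} together with dominated convergence. I do not expect a substantive obstacle here; the real content is the two-variable identity of Lemma \ref{le:LL}, and the present lemma is its natural lift obtained by conditioning on the $\delta$-independent part of $Q(\boldsymbol{a}, \boldsymbol{u})$.
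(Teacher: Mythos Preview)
Your proposal is correct and follows essentially the same approach as the paper: the paper sets $T=\sum_{i=3}^m a_i G_i - b\sum_{j=1}^n Y_j$ (your $W$), observes that conditioning on $T$ reduces the identity to Lemma~\ref{le:LL} with a shifted $x$, and then takes expectations. You are somewhat more explicit about the interchange of differentiation and expectation, which the paper simply asserts.
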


\begin{proof}
Set $T=\sum_{i=3}^m a_i G_i- b \sum_{j=1}^n Y_j$. Then
\begin{equation}
\label{eq:cond}
 \frac{\partial}{\partial\delta} \mathbb{P}(Q(\boldsymbol{a}, \boldsymbol{u})\leq x|T) = (a_1-a_2)\frac{\partial^2}{\partial x^2} \mathbb{P}(Q(\boldsymbol{a}, \boldsymbol{u})+a_1 Z_1 +a_2 Z_2\leq x|T),
\end{equation}
which is equivalent to \eqref{eq:SB} with a different $x$.
Taking the expectation in \eqref{eq:cond} over $T$ yields \eqref{eq:key}.
\end{proof}

\begin{lemma}
\label{le:mode}
Let $X$ and $Y$ be independent random variables where $Y\sim\Exp(1)$
and $X$ has a density $f(x)$ such that
\begin{enumerate}[\rm (i)]
\item
$f(x)$ is continuously differentiable with a bounded derivative on $(-\infty, \infty)$,

\item
$f(x)>0$ for sufficiently small $x\in (-\infty, \infty)$,

\item
$f(x)$ is unimodal, i.e., there exists $a\in (-\infty, \infty)$ such that $f'(x)\geq 0$ if $x<a$ and $f'(x)\leq 0$ if $x>a$.
\end{enumerate}
For $\lambda> 0$, denote the density of $X+\lambda Y$ by $f_\lambda(x)$.  Then $f_\lambda(x)$ is unimodal and if $f'_\lambda(x_0)=0$ then $x_0$ is a mode of $f_\lambda$. Moreover, if $\lambda>\lambda_0> 0$, then any mode of $f_\lambda(x)$ is strictly larger than any mode of $f_{\lambda_0}(x)$.
\end{lemma}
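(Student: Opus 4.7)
The plan is to base the entire argument on the convolution representation
\[
f_\lambda(x)=\int_0^\infty f(x-\lambda y)\expo^{-y}\diff y
= \frac{\expo^{-x/\lambda}}{\lambda}\int_{-\infty}^x f(u)\expo^{u/\lambda}\diff u,
\]
from which one differentiation yields the key first-order identity $\lambda f_\lambda'(x) = f(x)-f_\lambda(x)$. At any critical point $x_0$ of $f_\lambda$, this gives $f_\lambda(x_0)=f(x_0)$, and one more differentiation yields $f_\lambda''(x_0)=f'(x_0)/\lambda$. The hypotheses on $f$ guarantee differentiation under the integral sign and that $f_\lambda$ is smooth, strictly positive on a suitable range, and vanishing at $\pm\infty$.

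For the unimodality assertion, the second-derivative formula combined with hypothesis (iii) forces any strict local minimum of $f_\lambda$ to lie strictly to the left of $a$ and any strict local maximum strictly to the right. If $f_\lambda$ had a strict local minimum at some $x_0<a$, then since $f_\lambda(x)\to 0$ as $x\to-\infty$, continuity would force an earlier local maximum at some $x_{-1}<x_0<a$, contradicting the dichotomy. Hence $f_\lambda$ has no strict local minimum; combined with its vanishing at $\pm\infty$, this forces unimodality and also shows that every zero of $f_\lambda'$ is a mode.

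For the monotonicity statement, fix a mode $m_0$ of $f_{\lambda_0}$ (which by the previous step satisfies $m_0\ge a$) and let $\lambda>\lambda_0$. By the first-order identity applied to $f_\lambda$, it suffices to prove $f_\lambda(m_0)<f(m_0)$, because this gives $f_\lambda'(m_0)>0$, and unimodality of $f_\lambda$ then places every mode of $f_\lambda$ strictly to the right of $m_0$. Setting $c=\lambda/\lambda_0>1$ and substituting $u=cy$ in the convolution formula gives
\[
f_\lambda(m_0)=\int_0^\infty \psi(u)\,\frac{\expo^{-u/c}}{c}\diff u, \qquad \psi(u):=f(m_0-\lambda_0 u),
\]
which is to be compared with the mode relation $\psi(0)=f(m_0)=\int_0^\infty \psi(u)\expo^{-u}\diff u$. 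Since $f$ is unimodal with mode $a\le m_0$, $\psi$ is unimodal on $[0,\infty)$ with mode at $(m_0-a)/\lambda_0$, and the mode relation forces $\psi-\psi(0)$ to change sign exactly once, from positive to negative. The likelihood ratio $c^{-1}\expo^{u(1-1/c)}$ of the two exponential densities is strictly increasing in $u$, so a standard single-crossing (MLR) inequality yields
\[
\int_0^\infty[\psi(u)-\psi(0)]\frac{\expo^{-u/c}}{c}\diff u<\int_0^\infty[\psi(u)-\psi(0)]\expo^{-u}\diff u=0,
\]
i.e.\ the required inequality $f_\lambda(m_0)<f(m_0)$.

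The main obstacle is this final single-crossing step: one needs the mode equation itself to rule out degenerate possibilities (such as $\psi$ being monotone or constant on $[0,\infty)$) so that $\psi-\psi(0)$ has a genuine sign change, and one needs the strict monotonicity of the likelihood ratio together with the positivity hypothesis on $f$ to ensure the resulting inequality is strict.
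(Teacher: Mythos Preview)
Your approach via the first-order identity $\lambda f_\lambda'(x)=f(x)-f_\lambda(x)$ is nice and genuinely different from the paper's. The paper works throughout with the integral representation $f_\lambda'(x)=\lambda^{-1}\expo^{-x/\lambda}\int_{-\infty}^{x}f'(z)\expo^{z/\lambda}\diff z$: unimodality comes from the observation that $\expo^{x/\lambda}f_\lambda'(x)$ is nonincreasing on $(a,\infty)$ (plus the classical log-concave convolution fact), and the monotonicity in $\lambda$ is a direct variation-diminishing argument applied to the kernel $\expo^{z/\lambda}$ against the single-sign-change function $1_{(-\infty,x)}(z)f'(z)$. Your monotonicity step reaches the same conclusion by a single-crossing/MLR argument applied to $f_\lambda$ itself rather than to $f_\lambda'$; the two arguments are cousins, and yours has the virtue of being entirely self-contained once the identity is in hand.

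There is, however, a real gap in your unimodality paragraph. The inference ``$f_\lambda''(x_0)=f'(x_0)/\lambda$, so strict local minima lie strictly left of $a$ and strict local maxima strictly right'' is not justified: at a strict local minimum one only knows $f_\lambda''(x_0)\geq 0$, and hypothesis (iii) allows $f'(x_0)=0$ on either side of $a$, so the second-derivative test alone does not force the dichotomy. Moreover, ``no strict local minimum together with vanishing at $\pm\infty$ implies unimodal'' is false in general (a smooth density with a flat valley between two humps has no \emph{strict} local minimum yet fails unimodality). The fix stays within your framework: from the identity, for $x\leq a$ one has $f_\lambda(x)=\mathbb{E}[f(x-\lambda Y)]<f(x)$ (strict because $f$ cannot be constant on a left half-line), so $f_\lambda'>0$ on $(-\infty,a]$; and if $x_0>a$ is any critical point and $f_\lambda'(x_1)>0$ for some $x_1>x_0$, then on the interval $(x_2,x_1]$ just to the right of the last zero $x_2\in[x_0,x_1)$ one has simultaneously $f_\lambda(x)>f_\lambda(x_2)=f(x_2)\geq f(x)$ and $\lambda f_\lambda'(x)=f(x)-f_\lambda(x)<0$, a contradiction. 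This shows directly that $f_\lambda'\leq 0$ to the right of any critical point, hence every zero of $f_\lambda'$ is a mode. With this patch your proof is complete and arguably cleaner than the paper's, which invokes Ibragimov's strong-unimodality theorem and Karlin's sign-regular kernel machinery.
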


\begin{proof}

This result is similar to \citet[Lemma~1]{SzeBak:PTRF2003}.  We provide a quick proof using variation diminishing properties of sign regular kernels \citep[see][]{Kar:SUP1968}. First, since the density of $\lambda Y$ is log-concave (a.k.a. strongly unimodal) its convolution with the unimodal $f(x)$ is also unimodal, that is, the pdf of $X+\lambda Y$ is unimodal \citep[see][]{Ibr:TVP1956, Kar:SUP1968}.

Differentiating (justified by (i)) yields
\begin{align*}
f_\lambda'(x) &= \int_0^\infty f'(x-z) \frac{1}{\lambda} \expo^{-z/\lambda} \diff z \\
&= \int_{-\infty}^x f'(z) \frac{1}{\lambda} \expo^{(z-x)/\lambda} \diff z \\
&=\frac{\expo^{-x/\lambda}}{\lambda} \int 1_{(-\infty,x)}(z) f'(z) \expo^{z/\lambda} \diff z .
\end{align*}
Suppose $f_\lambda'(x_0)=0$.  Since $f'(z) \geq 0$ for $z\leq a$, we know from the representation above that $f_\lambda'(x)>0$ if $x\leq a$, and hence $x_0>a$.  The representation also shows that the function $\expo^{x/\lambda} f_\lambda'(x)$ is nonincreasing in $x\in (a,\infty)$.  Therefore $f_\lambda'(x)\geq 0$ if $x\in (a, x_0)$ and $f_\lambda'(x)\leq 0$ if $x>x_0$.  It follows that $x_0$ is a mode of $f_\lambda(x)$.

For fixed $x$, the function $1_{(-\infty,x)}(z) f'(z)$ as a function of $z$ does not vanish (by (ii)), and has at most one sign change from positive to negative (by (iii)), and the kernel $\expo^{z/\lambda}$ is strictly reverse rule \citep[see][]{Kar:SUP1968}. It follows that $\int 1_{(-\infty,x)}(z) f'(z) \expo^{z/\lambda} \diff z $ has at most one sign change from negative to positive, as a function of $\lambda$. Thus, if for a given $x$, $f_{\lambda_0}'(x)=0$ and $\lambda>\lambda_0$, then $f_\lambda'(x) >0$, and the result follows.
\end{proof}

%
%\begin{lemma}
%\label{le:y2}
%Let $\lambda_i, \alpha_i>0,\ i=1,2,$ and let $X$ and $Y$ be independent with $X\sim {\GAMMA}(\alpha_1, 1)$ and $Y\sim {\GAMMA}(\alpha_2, 1)$.  Let $f(x)$ denote the pdf of $\lambda_1 X -\lambda_2 Y$.  If $\alpha_1\geq 2$, then
%$f'(0)$ and $\lambda_1(\alpha_1 -1)- \lambda_2(\alpha_2-1)$ have the same sign.
%%Consequently, the mode of $\lambda_1 X -\lambda_2 Y$ is positive if $\lambda_1(\alpha_1 -1)> \lambda_2(\alpha_2-1)$, %and negative if $\lambda_1(\alpha_1 -1)< \lambda_2(\alpha_2-1)$.
%\end{lemma}
%\begin{proof}
%We have
%\[
%f(x)= C\int_0^\infty (x+y)^{\alpha_1-1} \expo^{-(x+y)/\lambda_1} y^{\alpha_2-1} \expo^{-y/\lambda_2} \diff y,\quad x\geq 0,
%\]
%where $C$  denotes a positive constant that changes in the proof.  Since $\alpha_1\geq 2$, differentiating under the integral sign is permitted.  We get
%\[
%f'(x)= C\int_0^\infty (x+y)^{\alpha_1-2} \left(\alpha_1-1-\frac{x+y}{\lambda_1}\right) \expo^{-(x+y)/\lambda_1} y^{\alpha_2-1} \expo^{-y/\lambda_2} \diff y,\quad x\geq 0.
%\]
%In particular, easy calculations lead to $f'(0)= C[\lambda_1(\alpha_1-1)-\lambda_2(\alpha_2-1)]$,
%thus proving the lemma.
%\end{proof}
%

\begin{proof}[Proof of Theorem~\ref{th:minimizer}]
Let $Q(\boldsymbol{a}):= Q(\boldsymbol{a}, \boldsymbol{1}_m)$  as in \eqref{eq:Q}.  Consider minimizing $\mathbb{P}(Q(\boldsymbol{a})\leq 0)$ over
\begin{equation*}
\Omega = \left\{\boldsymbol{a}:\ 0\leq a_i,\ \sum_{i=1}^m a_i =m \right\}.
\end{equation*}
Since $\Omega$ is compact, and $\mathbb{P}(Q\leq 0)$ is continuous in $\boldsymbol{a}$, the minimum is attained, say, at $\boldsymbol{a}^*\in \Omega$.

\begin{claim}\label{cl:claimB}
In any minimizing point $\boldsymbol{a}^{*}$ of $\mathbb{P}(Q \le 0)$ the $a_i^*$'s take at most two distinct nonzero values.  Moreover, in the case of two distinct nonzero values, the smaller one appears only once.
\end{claim}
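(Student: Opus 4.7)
The plan is to combine the perturbation identity of Lemma~\ref{le:QQ} with the strict mode-monotonicity furnished by Lemma~\ref{le:mode}. Take any pair of indices $i \neq j$ with $a_i^{*}, a_j^{*} > 0$, and consider the feasible one-parameter family $a_i = a_i^{*} + \delta$, $a_j = a_j^{*} - \delta$, with all other coordinates fixed. Writing $g(\delta) := \mathbb{P}(Q \leq 0)$, Lemma~\ref{le:QQ} specialized to $u_1 = u_2 = 1$ gives
\[
g'(\delta) = (a_i - a_j)\, f'_{W(\delta)}(0), \qquad W(\delta) = Q + a_i Z_1 + a_j Z_2,
\]
where $Z_1, Z_2 \sim \Exp(1)$ are independent of each other and of everything else. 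Since $\delta = 0$ is interior, the first-order condition $g'(0) = 0$ forces $f'_{W(0)}(0) = 0$ whenever $a_i^{*} \neq a_j^{*}$, i.e.\ $0$ is a mode of $f_{W(0)}$. When $a_i^{*} = a_j^{*} = a > 0$, one instead has $g'(\delta) = 2\delta\, f'_{W(\delta)}(0)$, so the second-order condition $g''(0) \geq 0$ gives $f'_{W(0)}(0) \geq 0$; that is, the mode of $f_{W(0)}$ is $\geq 0$.

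Next I will verify that Lemma~\ref{le:mode} applies to the ``base'' random variable $X := Q^{*} + \alpha\, Z$, where $Q^{*} := \sum_s \alpha_s^{*} \GAMMA(u_s, 1) - b \sum_j Y_j$ and $Z \sim \Exp(1)$ with $\alpha > 0$. Each summand of $Q^{*}$ has a log-concave density (Gamma with integer shape $\geq 1$; the density $\expo^{x} \mathbf{1}_{\{x \leq 0\}}$ of $-Y_j$), so $f_{Q^{*}}$ is log-concave, hence unimodal and bounded; it is also positive on all of $\mathbb{R}$, since $-b \sum Y_j$ has support $(-\infty, 0]$ while the Gamma part has support $[0, \infty)$. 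Convolving once more with $\alpha Z$ preserves log-concavity and yields a $C^1$ density with bounded derivative, via the identity $(f * g)'(x) = \mu\bigl(f(x) - (f * g)(x)\bigr)$ when $g \sim \Exp(\mu)$. Hence conditions~(i)--(iii) of Lemma~\ref{le:mode} hold for $X$.

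Suppose, for contradiction, that $\boldsymbol{a}^{*}$ has three distinct nonzero values $\alpha_1 < \alpha_2 < \alpha_3$. Picking pairs of indices realizing the value-pairs $(\alpha_1, \alpha_2)$ and $(\alpha_1, \alpha_3)$, the first-order condition makes $0$ a mode of each of $W_{12} \stackrel{d}{=} Q^{*} + \alpha_1 Z + \alpha_2 Z'$ and $W_{13} \stackrel{d}{=} Q^{*} + \alpha_1 Z + \alpha_3 Z'$. Setting $X := Q^{*} + \alpha_1 Z$, these can be written as $X + \alpha_2 \Exp(1)$ and $X + \alpha_3 \Exp(1)$, so Lemma~\ref{le:mode} yields $\mathrm{mode}(W_{13}) > \mathrm{mode}(W_{12})$, contradicting the fact that both equal $0$. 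Hence at most two distinct nonzero values occur in $\boldsymbol{a}^{*}$.

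For the two-value case with smaller value $\alpha_1$ of multiplicity $u_1 \geq 2$, apply the second-order condition to a pair of indices both carrying the value $\alpha_1$ to conclude that the mode of $W^{(aa)} \stackrel{d}{=} Q^{*} + \alpha_1 Z + \alpha_1 Z'$ is $\geq 0$, and apply the first-order condition to a mixed pair (values $\alpha_1$ and $\alpha_2$) to conclude that the mode of $W^{(ab)} \stackrel{d}{=} Q^{*} + \alpha_1 Z + \alpha_2 Z'$ equals $0$. Writing both as $X + \lambda\, \Exp(1)$ with the same $X = Q^{*} + \alpha_1 Z$, Lemma~\ref{le:mode} forces $\mathrm{mode}(W^{(ab)}) > \mathrm{mode}(W^{(aa)}) \geq 0$, contradicting $\mathrm{mode}(W^{(ab)}) = 0$. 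The main obstacle I anticipate is the careful verification of the regularity hypotheses of Lemma~\ref{le:mode} when the base density $f_{Q^{*}}$ has two-sided support coming from the $-b \sum Y_j$ term; the log-concavity route above is meant to handle precisely this issue.
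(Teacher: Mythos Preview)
Your proof is correct and follows essentially the same approach as the paper: both use Lemma~\ref{le:QQ} to translate the first- and second-order optimality conditions at $\boldsymbol{a}^{*}$ into statements about the mode of $Q(\boldsymbol{a}^{*})+a_i^{*}Z_1+a_j^{*}Z_2$, and then invoke the strict monotonicity of the mode in Lemma~\ref{le:mode} (applied with base $X=Q(\boldsymbol{a}^{*})+\alpha_1 Z$) to obtain a contradiction. The only cosmetic difference is organizational: the paper treats both conclusions at once by assuming $0<a_1^{*}\le a_2^{*}<a_3^{*}$ and deriving the contradiction through a single application of the perturbation identity on the pair $(a_1^{*},a_3^{*})$, whereas you phrase it as two separate contradictions, each time comparing two first/second-order conditions directly; your verification of hypotheses~(i)--(iii) of Lemma~\ref{le:mode} via log-concavity is exactly what the paper does in one line.
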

\begin{proof}
Assume the contrary, say $0<a_1^*\leq a_2^*<a_3^*$.  We show below in Case \ref{ca:a1lea2} that more than two distinct values are impossible by showing that $a_1^*<a_2^*$ leads to a contradiction. Similarly Case \ref{ca:a1eqa2} implies the impossibility of repetitions of the smallest of two distinct values. Let $a_1=a_1^*+\delta,\ a_2=a_2^*-\delta,\ a_i=a_i^*,\ 3\leq i\leq m$.
Then by \eqref{eq:key} we have
\begin{equation}
\label{eq:key2}
 \frac{\partial}{\partial\delta} \mathbb{P}(Q(\boldsymbol{a})\leq x) = (a_1-a_2)\frac{\partial^2}{\partial x^2} \mathbb{P}(Q(\boldsymbol{a})+a_1 Z_1 +a_2 Z_2\leq x),
\end{equation}
where $Z_1$ and $Z_2$ are i.i.d. random variables with $Z_{1}\sim\Exp(1)$,
independent of $Q$.  We can focus on $x=0$.

\begin{case}\label{ca:a1lea2}
$a_1^*<a_2^*$. Since $\delta=0$ achieves the minimum, both sides of \eqref{eq:key2} with $x=0$ vanish at $\delta=0$.  The density function  of $Q(\boldsymbol{a}^{*})+a_1^*Z_1$ is positive everywhere and is log-concave and hence unimodal.  By Lemma \ref{le:mode}, $S=Q(\boldsymbol{a}^{*})+a_1^*Z_1+a_2^* Z_2$ has a mode at zero. Following Case \ref{ca:a1eqa2} we show that this leads to a contradiction.
\end{case}

\begin{case}\label{ca:a1eqa2}
 $a_1^*=a_2^*$.  Then \eqref{eq:key2} gives
\[
\lim_{\delta\downarrow 0} \frac{\partial \mathbb{P}(Q(\boldsymbol{a})\leq 0)}{\partial \delta} =0
\]
and
\[
\left. \frac{\partial^2}{\partial\delta^2} \mathbb{P}(Q(\boldsymbol{a})\leq 0)\right|_{\delta=0} =\left. 2\lim_{\delta\to 0}\frac{\partial^2}{\partial x^2} \mathbb{P}(Q(\boldsymbol{a})+a_1 Z_1 + a_2 Z_2\leq x)\right|_{x=0}.
\]
A minimum at $\delta=0$ entails
\[
\left. \frac{\partial^2}{\partial x^2} \mathbb{P}(Q(\boldsymbol{a}^{*})+a_1^* Z_1 + a_2^* Z_2\leq x)\right|_{x=0}\geq 0,
\]
showing that $S=Q(\boldsymbol{a}^{*})+a_1^*Z_1+a_2^* Z_2$ has a mode that is nonnegative.
\end{case}

Thus $S$ has a nonnegative mode in either case.  By Lemma \ref{le:mode} and $a_2^*<a_3^*$, any mode of $Q(\boldsymbol{a}^{*})+a_1^*Z_1+a_3^* Z_2$ is strictly positive, i.e.,
\[
\left. \frac{\partial^2}{\partial x^2} \mathbb{P}(Q(\boldsymbol{a}^{*})+a_1^* Z_1 + a_3^* Z_2\leq x)\right|_{x=0}> 0.
\]
The latter expression, multiplied by $(a_1^*-a_3^*)$ is negative.  Using \eqref{eq:key2} with $a_3^*$ in place of $a_2^*$, however, this implies that $\mathbb{P}(Q(\boldsymbol{a})\leq 0)$ strictly decreases under the perturbation $(a_1^*, a_3^*)\to (a_1^*+\delta, a_3^*-\delta)$ for small $\delta>0$, which is a contradiction to the minimality at $\delta=0$. Note that the crux of the proof is in comparing two perturbations.
\end{proof}

\begin{claim}\label{cl:key}
In any minimizing point $\boldsymbol{a}^{*}$ of $\mathbb{P}(Q \le 0)$ the $a_i^*$'s are either all equal, or take only two distinct values, in which case one of them is zero.
\end{claim}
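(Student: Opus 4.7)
The plan is a proof by contradiction: assume $\boldsymbol{a}^{*}$ has more than one distinct nonzero value and derive a contradiction from the necessary conditions for a local minimum. By Claim~\ref{cl:claimB}, we may relabel so that $0 < a_1^{*} < a_2^{*} = \cdots = a_k^{*}$ for some $k \ge 2$, with any remaining coordinates equal to zero; set $a := a_2^{*}$. Exactly as in Case~\ref{ca:a1lea2} of Claim~\ref{cl:claimB}, applying the pair-perturbation argument to $(a_1^{*}, a_2^{(j)})$ for any $j \ge 2$ shows that $0$ is a mode of $S := Q(\boldsymbol{a}^{*}) + a_1^{*} Z_1 + a\, Z_2 = a_1^{*} G_2 + a\, G_k - b \sum_{i=1}^{n} Y_i$; equivalently, $f'_S(0) = 0$.

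The next step is to exploit the equal multiplicity of $a_2^{*}$ via a Gamma-aggregated perturbation. Consider the one-parameter family $a_1(\delta) = a_1^{*} + \delta$, $a_2^{(j)}(\delta) = a - \delta/(k-1)$ for $2 \le j \le k$, and set $\phi(\delta) := \mathbb{P}(Q(\boldsymbol{a}(\delta)) \le 0)$. By Lemma~\ref{le:QQ} applied with $u_1 = 1$ and $u_2 = k-1$, one has $\phi'(\delta) = (a_1(\delta) - a_2(\delta))\, f'_{S(\delta)}(0)$, which vanishes at $\delta = 0$ thanks to the mode condition. Differentiating once more and using $f'_S(0) = 0$ at $\delta = 0$ yields
\[
\phi''(0) \;=\; (a_1^{*} - a)\,\left.\partial_\delta f'_{S(\delta)}(0)\right|_{\delta=0},
\]
and since $a_1^{*} - a < 0$, the desired contradiction $\phi''(0) < 0$ is equivalent to $\partial_\delta f'_{S(\delta)}(0)|_0 > 0$. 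A Laplace-transform identity in the style of Lemma~\ref{le:LL} expresses this quantity in terms of $f''_W(0)$ and $f'''_W(0)$ for the auxiliary Gamma convolution $W := a_1^{*} G_3 + a\, G_{k+1} - b\sum_{i=1}^n Y_i$ obtained by adjoining two extra independent exponentials to $S$.

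The main obstacle is signing this quantity. The plan is to combine three ingredients: (i) the first-order condition $f'_S(0) = 0$; (ii) strict modal inequalities from Lemma~\ref{le:mode}, applied to bases obtained by peeling single exponentials from $G_2$ or $G_k$ inside $S$, which in particular place the modes of $Q + a_1^{*} Z_1 + a_1^{*} Z_2$ and $Q + a\, Z_1 + a\, Z_2$ strictly on opposite sides of $0$; and (iii) log-concavity of $f_W$, coming from $W$ being a convolution of log-concave exponential and gamma densities. Taken together, these should force $\partial_\delta f'_{S(\delta)}(0)|_0 > 0$, giving $\phi''(0) < 0$ and contradicting minimality: $\boldsymbol{a}^{*}$ would then be a local \emph{maximum} of $\phi$ along a feasible perturbation, which is impossible at a minimizer. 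This rules out the configuration with $0 < a_1^{*} < a_2^{*}$ and completes the proof.
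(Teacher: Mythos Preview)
Your outline is sound through the point where you establish $\phi'(0)=0$ and reduce the question to signing $\left.\partial_\delta f'_{S(\delta)}(0)\right|_{\delta=0}$. The gap is everything after that.

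First, the promised ``Laplace-transform identity in the style of Lemma~\ref{le:LL}'' does not go through as you suggest. Lemma~\ref{le:LL} works precisely because the perturbation rates $1/u_1,\ 1/u_2$ match the shape parameters of the gammas being perturbed, so the $u_i$'s cancel in the log-Laplace derivative. For $S(\delta)=a_1(\delta)G_2+a_2(\delta)G_k-b\sum Y_j$ the shapes are $2$ and $k$, but your perturbation rates are $1$ and $1/(k-1)$. A direct computation of $\partial_\delta \log \mathcal{L}(f_{S(\delta)})$ gives
\[
-\frac{2t}{1+a_1 t}+\frac{k}{k-1}\cdot\frac{t}{1+a_2 t},
\]
which does not collapse to a single product $(a_1-a_2)t^2/[(1+a_1 t)(1+a_2 t)]$; you get a linear combination of terms involving $W$-type variables with different extra exponentials attached to each block, not a clean $f''_W(0),\ f'''_W(0)$ expression for one $W$.

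Second, and more seriously, even granting some workable identity, your three ingredients do not obviously force the sign you need. Log-concavity of $f_W$ gives $(\log f_W)''\le 0$, and the modal inequalities locate the modes of $Q+a_1^*Z_1+a_1^*Z_2$ and $Q+aZ_1+aZ_2$ on opposite sides of zero; but you never explain how these combine to determine the sign of a specific linear combination of higher derivatives of $f_W$ at $0$. The word ``should'' in your last paragraph is doing all the work.

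The paper's proof is quite different and reveals why a soft argument is unlikely to succeed. It parameterizes the whole segment from equal allocation to $a_1=0$ by a single $\delta\in[0,1/k]$, expresses $\partial_\delta\mathbb{P}(Q\le 0)$ via a Beta/$F$ representation as $-D_1(\delta)g(\delta)$ for an explicit integral $g$, and then derives a differential relation expressing $g'(\delta)$ in terms of $g(\delta)$ plus an explicit quadratic in $\delta$. The analysis splits into three regimes according to whether $\lambda(n-1)$ exceeds $k+2$, lies in $(k+1,k+2)$, or is below $k+1$; in the middle regime the argument is not a second-derivative test at all but a global sign-change count for $g$. This parameter dependence is exactly what your uniform ``modal plus log-concave'' sketch does not capture. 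If you want to salvage your approach, you would at minimum need a genuine identity for $\partial_\delta f'_{S(\delta)}(0)$ and an argument that actually engages with the three regimes.
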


\begin{proof}
Assume the contrary, and in view of Claim 5.4, suppose we have
\[
0<a_1^*<a_2^*=\cdots = a_{k+1}^*,\ 1\leq k < m,\ a_{k+2}^*=\cdots = a_m^*=0, \quad \text{and}\quad \sum_{i=1}^m a_i^*=m.
\]
Then for some $\delta \in (0, 1/k)$,\, $a_1^*,\ldots, a_m^*$ must be of the form
\ignore{ IGNORE
\[
a_1^*=(1-k\delta)m/(k+1),\ a_2=\cdots=a_{k+1}=(1+\delta)m/(k+1),\ 0\leq \delta\leq 1/k.
\]  END IGNORE  }
\[
a_1^*=(1-k\delta)m/(k+1),\ a_2^*=\cdots=a_{k+1}^*=(1+\delta)m/(k+1),\, a_{k+2}^*=\cdots = a_m^*=0.
\]
We then have
\[
\frac{k+1}{m} Q(\boldsymbol{a})=(1- k\delta) X + (1+\delta) G - \lambda Y,\quad \lambda = \frac{b(k+1)}{m},
\]
with $X\sim \Exp(1),\ G\sim \GAMMA(k, 1),\ Y\sim \GAMMA(n, 1)$ independently.  We show that the minimum of $\mathbb{P}(Q(\boldsymbol{a})\leq 0)$ cannot be achieved in the open interval $\delta \in (0, 1/k)$, contradicting the assumption that $\boldsymbol{a}^{*}$ is a minimizer.  We have
\begin{align*}
\mathbb{P}(Q(\boldsymbol{a})\leq 0) &=\mathbb{P}\left(1+\delta (1-(k+1)B)\leq \frac{\lambda Y}{X+G}\right),
\end{align*}
where $B:= X/(X+G)$.  Note that $B$ has a $\BETA(1, k)$ distribution, $Y/(X+G)$ has a scaled $F(2n, 2(k+1))$ distribution, and $B$ and $Y/(X+G)$ are independent.  Thus
\[
\mathbb{P}(Q(\boldsymbol{a})\leq 0)=C_1\, \mathbb{E} \left[\int_{1+\delta (1-(k+1)B)}^\infty \frac{y^{n-1}}{(\lambda +y)^{n+k+1}}\, \diff y \right].
\]
where above and below, $C_i>0$ denote constants that do not depend on $\delta$, and $D_i(\delta)>0$ denote functions of $\delta\in (0, 1/k)$, and both may depend on other constants such as $\lambda, k$, etc.  It follows that
\begin{align}
\nonumber
\frac{\partial \mathbb{P}(Q(\boldsymbol{a})\leq 0)}{\partial \delta} &=-C_1 \, \mathbb{E}  \left[(1-(k+1)B) \frac{(1+\delta (1-(k+1)B))^{n-1}}{(\lambda +1+\delta(1-(k+1)B))^{n+k+1}}\right]\\
\label{eq:ginte}
&= -C_2 \int_{-k}^1 x (x+k)^{k-1} \frac{(1+\delta x)^{n-1}}{(\lambda +1+\delta x)^{n+k+1}}\, \diff x\\
\label{eq:gdelta}
&= -D_1(\delta) g(\delta),
\end{align}
where
\begin{align*}
g(\delta) &:= \int_1^{p} \left[(\lambda +1-\delta k)(y-1)-\delta k \lambda y\right] y^{n-1} (y-1)^{k-1}\, \diff y,\\
\nonumber
p = p(\delta) &:= \frac{(1+\delta)(\lambda +1-\delta k)}{(\lambda +1+\delta)(1-\delta k)},
\end{align*}
and \eqref{eq:gdelta} uses the change of variables
\[
y=\frac{(1+\delta x)(\lambda +1-\delta k)}{(\lambda +1+\delta x)(1-\delta k)}.
\]
Using the closed form integral
\[
\int_1^p\left[ky + n(y-1)\right] y^{n-1} (y-1)^{k-1}\, \diff y = p^n(p-1)^k
\]
we get
\begin{align}
\nonumber
g'(\delta) &= \frac{\lambda \delta (\lambda +1-\delta k)}{\lambda +1 +\delta} p^{n-1} (p-1)^{k-1}p'(\delta) + \int_1^p k(1-(\lambda +1)y) y^{n-1} (y-1)^{k-1}\, \diff y\\
\nonumber
&=\frac{\lambda \delta (\lambda +1-\delta k)}{\lambda +1 +\delta} p^{n-1} (p-1)^{k-1}p'(\delta) + \frac{(\lambda n -k) g(\delta) - \lambda(\lambda +1) p^n (p-1)^k}{\lambda +1-\delta k + \lambda n \delta}\\
\nonumber
&=D_2(\delta) \left[k(\lambda n -k) \delta^2 + (\lambda +1)(k-1)\delta +(\lambda +1)(\lambda (n-1) -k-2)\right]\\
\label{eq:gprime}
&\quad\quad  + \frac{(\lambda n -k) g(\delta)}{\lambda +1-\delta k + \lambda n \delta}.
\end{align}
Specifically
\[
D_2(\delta)=\frac{\lambda \delta p^{n}(p-1)^{k}}{(1 + \delta)(\lambda+1+\delta) (\lambda+1-\delta k+\lambda n\delta)}.
\]

It is helpful to determine the sign of $g(\delta)$ for small $\delta>0$ and large $\delta<1/k$.  Let us denote the integral in \eqref{eq:ginte} by $\tilde{g}(\delta)$, which has the same sign as $g(\delta)$ for $\delta\in (0, 1/k)$.  A Taylor expansion yields
\begin{align*}
\tilde{g}(\delta) &=\int_{-k}^1 \left[\frac{x(x+k)^{k-1}}{(\lambda+1)^{n+k+1}} + \frac{(\lambda(n-1)-k-2) \delta} {(\lambda+1)^{n+k+2}} x^2(x+k)^{k-1} \right]\, \diff x + o(\delta)\\
& = C_3 (\lambda(n-1)-k-2)\delta +o(\delta),\quad \text{as}\ \delta\downarrow 0.
\end{align*}
By direct calculation,
\[
\tilde{g}(1/k)= C_4 (\lambda (n-1)-k-1).
\]

We distinguish three cases:
\begin{enumerate}[(i)]
\item\label{ca:ineq-1}
$\lambda (n-1) > k+2$.  Then $\tilde{g}(\delta)>0$ and hence $g(\delta)>0$ for sufficiently small $\delta>0$.  Moreover, by \eqref{eq:gprime}, $g'(\delta)> D_3(\delta) g(\delta),\ \delta\in (0,1/k)$.  It follows that $g(\delta)>0$ for all $\delta\in (0, 1/k)$, i.e., $\mathbb{P}(Q(\boldsymbol{a})\leq 0)$ decreases in $\delta\in [0, 1/k]$.  The same holds in the boundary case $\lambda (n-1) = k+2$.

\item\label{ca:ineq-2}
$k+1< \lambda (n-1)< k+2$.  Then $g(\delta)<0$ for sufficiently small $\delta>0$, and $g(\delta)>0$ for sufficiently large $\delta <1/k$.  If the minimum of $\mathbb{P}(Q(\boldsymbol{a})\leq 0)$ is achieved at $\delta^*\in (0, 1/k)$, then $g(\delta^*)=0\geq g'(\delta^*)$, and $g(\delta)$ has at least one root in $(0, \delta^*)$, say $\delta^{**},$ such that $g'(\delta^{**})\geq 0$.  This contradicts \eqref{eq:gprime}, however, because the term in square brackets strictly increases in $\delta$.

\item\label{ca:ineq-3}
$\lambda (n-1)< k+1$.  Then $g(\delta)<0$ for both sufficiently large $\delta <1/k$ and sufficiently small $\delta >0$.  Suppose $g(\delta^*)> 0$ for some $\delta^*\in (0, 1/k)$.  If $\lambda n>k$ then a contradiction results as in Case (ii).  Otherwise the term in square brackets in \eqref{eq:gprime} is no more than
\[
(\lambda +1)(k-1)k^{-1} + (\lambda +1)(\lambda (n-1)-k-2)<0.
\]
Thus any $\delta\in (0, 1/k)$ such that $g(\delta)=0$ entails $g'(\delta)<0$.  This is impossible as $g(\delta)$ cannot cross zero from above without first doing so from below.  Hence $g(\delta)\leq 0$, i.e., $\mathbb{P}(Q(\boldsymbol{a})\leq 0)$ increases in $\delta\in [0, 1/k]$.  The same holds in the boundary case $\lambda (n-1)=k+1$. \qedhere
\end{enumerate}
\end{proof}

We now prove the three statements of Theorem~\ref{th:minimizer}.
\begin{itemize}
\item[\eqref{it:th:minimizer-a}]
This is an immediate consequence of Claim~\ref{cl:key}.

\item[\eqref{it:th:minimizer-b}]
Let $h(k)=\mathbb{P}(Q(\boldsymbol{a})\leq 0)$ with
\[
a_1=\dots= a_k=\frac{m}{k},\ 1\leq k\leq m,\quad\text{and}\quad a_{k+1}=\cdots =a_m=0.
\]
Comparing $\mathbb{P}(Q(\boldsymbol{a})\leq 0)$ in Case~\eqref{ca:ineq-3} of the proof of Claim (\ref{cl:key}) at $\delta=0$ and $\delta=1/k$,
we see that if $m \ge b(n-1)$, i.e.,
\[
\frac{b(k+1)(n-1)}{m} \le k+1,
\]
then $h(k+1) < h(k),\ 1\leq k <m$.  Thus $h(k)$ achieves its minimum at $k=m$.

\item[\eqref{it:th:minimizer-c}]
Suppose now $m < b(n-1)$.  According to Case~\eqref{ca:ineq-1}, if $b(k+1)(n-1)/m \geq k+2$, i.e.,
\begin{equation}\label{eq:last}
k+1\geq \frac{m}{(b(n-1)-m)},
\end{equation}
then $h(k+1)>h(k)$.  In particular, \eqref{eq:last} holds for all $k$ if $m\leq 2b(n-1)/3$, which yields $h(m)>\cdots > h(2)>h(1)$, i.e., $h(k)$ is minimized at $k=1$.  In general $h(k)$ is minimized at some $k\leq \lceil m/(b(n-1)-m) -1 \rceil$.    \qedhere
\end{itemize}
\end{proof}

\begin{proof}[Proof of Theorem~\ref{th:Nash}]
\begin{itemize}
\item[\eqref{it:th:Nash-a}]
Using Proposition~\ref{pr:kaminsky} and Theorem~\ref{th:minimizer}\eqref{it:th:minimizer-a}\eqref{it:th:minimizer-b}, once all the $a_{i}$ are multiplied by a factor $c_{A}/m$, we can prove that there exists a Nash equilibrium that satisfies \eqref{eq:equila} and \eqref{eq:equilb}, which we denote as $(\boldsymbol{a}^*, \boldsymbol{b}^*)$.  Assume $(\widetilde{\boldsymbol{a}}, \widetilde{\boldsymbol{b}})$ is another equilibrium.  Because the game is zero-sum, we have
\[
H_{m.,n}(\widetilde{\boldsymbol{a}}, \widetilde{\boldsymbol{b}}) \ge H_{m,n}(\boldsymbol{a}^*, \widetilde{\boldsymbol{b}}) \geq H_{m,n} (\boldsymbol{a}^*, \boldsymbol{b}^*)
\]
and
\[
H_{m,n}(\widetilde{\boldsymbol{a}}, \widetilde{\boldsymbol{b}}) \le H_{m,n}(\widetilde{\boldsymbol{a}}, \boldsymbol{b}^*) \leq H_{m,n}(\boldsymbol{a}^*, \boldsymbol{b}^*).
\]
Thus equalities must all hold.  Since $\boldsymbol{b}^*$ (equal allocation) is the unique optimal response to $\boldsymbol{a}^*$, for the equality to hold in $H_{m,n}(\boldsymbol{a}^*, \widetilde{\boldsymbol{b}}) \geq H_{m,n} (\boldsymbol{a}^*, \boldsymbol{b}^*)$ we must have $\widetilde{\boldsymbol{b}} = \boldsymbol{b}^*$.  Similarly, for the equality to hold in $H_{m,n}(\widetilde{\boldsymbol{a}}, \boldsymbol{b}^*) \leq H_{m,n}(\boldsymbol{a}^*, \boldsymbol{b}^*)$, $\widetilde{\boldsymbol{a}}$ must be of the form \eqref{eq:equila}.  Thus all pure equilibria satisfy \eqref{eq:equila} and \eqref{eq:equilb}.

\item[\eqref{it:th:Nash-b}]
Theorem~\ref{th:minimizer}\eqref{it:th:minimizer-b} guarantees that if $a_{1}^{*}=\dots=a_{m}^{*} = c_{A}/m$ and $b_{1}^{*}=\dots=b_{n}^{*}=c_{B}/n$, then
$\boldsymbol{a}^{*}$ is the unique best response to $\boldsymbol{b}^{*}$ and vice versa. This proves that $(\boldsymbol{a}^{*}, \boldsymbol{b}^{*})$ is a Nash equilibrium of the game.  This equilibrium is unique by the argument in part \eqref{it:th:Nash-a}.

\item[\eqref{it:th:Nash-c}]
Theorem~\ref{th:minimizer}\eqref{it:th:minimizer-c} guarantees that if $a_{i}^{*}=c_{A}$ for some $i \in \{1,\dots,m\}$ and $a_{j}^{*}=0$ for all $j \ne i$, and
 $b_{1}^{*}=\dots=b_{n}^{*}=c_{B}/n$, then $\boldsymbol{a}^{*}$ is a best response to $\boldsymbol{b}^{*}$ and Theorem~\ref{th:minimizer}\eqref{it:th:minimizer-b} guarantees that $\boldsymbol{b}^{*}$ is the unique best response to $\boldsymbol{a}^{*}$. This proves that $(\boldsymbol{a}^{*}, \boldsymbol{b}^{*})$ is a Nash equilibrium of the game. Again the argument used in part \eqref{it:th:Nash-a} shows that all Nash equilibria are of this form.

\item[\eqref{it:th:Nash-d}]
Suppose team $A$ allocates its strength equally among $r$ players, and team $B$ adopts the optimal strategy of equal allocation among all $n$ players.
Then, as $n\to \infty$, the winning probability for team $A$ approaches $f(r):= \mathbb{P}(c_A G_r > r c_B)$, where $G_r$ is a $\GAMMA(r, 1)$ random variable.  Letting $\beta:=c_B/c_A$, we get
\begin{align*}
f(r)-f(r+1) &= \int_{r\beta}^\infty \frac{r x^{r-1} \expo^{-x}}{\Gamma(r+1)} \diff x -\int_{(r+1)\beta}^\infty \frac{x^r \expo^{-x}}{\Gamma(r+1)} \diff x\\
& = \frac{1}{\Gamma(r+1)} \left( -(r\beta)^r \expo^{-r\beta} + \int_{r\beta}^{(r+1)\beta} x^r \expo^{-x} \diff x\right)\\
& = \frac{(r\beta)^{r} \expo^{-r \beta}}{\Gamma(r+1)}\left[\int_0^1 \left(1+\frac{y}{r}\right)^r \expo^{-y\beta} \beta \diff y - 1\right],
\end{align*}
where we have integrated by parts in the second equality and changed the variables $y= x/\beta -r$ in the third.  The integral inside the square brackets obviously increases in $r$.  Hence $f(r)> f(r+1)$ implies $f(r+1)> f(r+2)> \cdots >f(m)$.  Moreover, if $\beta = c_B/c_A > t_0$ then $f(1) > f(2)$ by direct calculation.  In this case $f(r)$ is maximized at $r=1$ and $r=1$ is the optimal strategy for team $A$ in the large $n$ limit. \qedhere
\end{itemize}
\end{proof}

\begin{proof}[Proof of Theorem~\ref{th:value}]
\begin{itemize}
\item[\eqref{it:th:value-a}]
Using Theorem~\ref{th:Nash}\eqref{it:th:Nash-a} we know that for some $1 \le r \le m$ and some permutation $\pi$ we have $a_{\pi(1)}^{*} = \dots = a_{\pi(r)}^{*} = c_{A}/r$, $a_{\pi(r+1)} = \dots = a_{\pi(m)} = 0$, and $b_1^*=\cdots =b_n^* = c_{B}/n$. Hence
\[
\sum_{i=1}^m a^*_{i} X_{i} \sim\GAMMA(r, r/c_{A}), \quad
\sum_{j=1}^n b^*_{j} Y_{j} \sim\GAMMA(n, n/c_{B}).
\]
Therefore, \citep[see, e.g,][]{CooNad:BJ2006, Coo:UTMD2008}
\begin{equation}\label{eq:Pincompletebeta}
\mathbb{P}\left(\sum_{i=1}^m a_i^* X_i > \sum_{j=1}^n b_j^* Y_j \right) = 1-I\left(\frac{rc_{B}}{rc_{B}+nc_{A}}, r,n \right),
\end{equation}
where $I$ is the regularized  incomplete beta function defined in \eqref{eq:incompletebeta}.

\item[\eqref{it:th:value-b}]
By Theorem~\ref{th:Nash}\eqref{it:th:Nash-b} in this case $r=m$.

\item[\eqref{it:th:value-c}]
By Theorem~\ref{th:Nash}\eqref{it:th:Nash-c} in this case $r=1$. \qedhere
\end{itemize}
\end{proof}

%
%
%
%\begin{proof}[Proof of Theorem~\ref{th:value}]
%\begin{itemize}
%\item[\eqref{it:th:value-a}]
%Using Theorem~\ref{th:Nash}\eqref{it:th:minimizer-b} and noting that in this case $a_1^*=\cdots =a_m^*$ and $b_1^*=\cdots =b_n^*$, we have
%\[
%\sum_{i=1}^m a^*_{i} X_{i} \sim\GAMMA(m, m/c_{A}), \quad
%\sum_{j=1}^n b^*_{j} Y_{j} \sim\GAMMA(n, n/c_{B}).
%\]
%Hence, see \eqref{eq:conbeta},
%\[
%\mathbb{P}\left(\sum_{i=1}^m a_i^* X_i > \sum_{j=1}^n b_j^* Y_j \right) = 1-I\left(\frac{mc_{B}}{mc_{B}+nc_{A}}, m,n \right),
%\]
%where $I$ is the regularized  incomplete beta function defined in \eqref{eq:incompletebeta}.
%\item[\eqref{it:th:value-b}]
%Using Theorem~\ref{th:Nash}\eqref{it:th:minimizer-c} we see that in this case
%\[
%\sum_{i=1}^m a_{i}^* X_{i} \sim\GAMMA(1, 1/c_{A}), \quad
%\sum_{j=1}^n b_{j}^* Y_{j} \sim\GAMMA(n, n/c_{B}).
%\]
%Hence
%\[
%\mathbb{P}\left(\sum_{i=1}^m a_i^* X_i > \sum_{j=1}^n b_j^* Y_j \right) = 1-I\left(\frac{c_{B}}{c_{B}+nc_{A}}, 1,n\right).
%\]
%\end{itemize}
%\end{proof}
%

\section{Monotonicity results}\label{se:monotonicity}

\subsection{Monotonicity of the value}

\begin{center}
FIGURE~\ref{fi:plotcaeqcb}  ABOUT HERE
\end{center}

%
%\begin{theorem}\label{th:value}
%Consider the game $\mathcal{G}(m,n,c_{A},c_{B})$.
%\begin{enumerate}[\rm (a)]
%\item\label{it:th:value-a}
%If \eqref{eq:inequalitycasimcb} holds, then the value of the game is
%\begin{equation}\label{eq:valueequal}
%\frac{1}{2}-I\left(\frac{mc_{B}}{mc_{B}+nc_{A}},m,n\right).
%\end{equation}
%
%\item\label{it:th:value-b}
%If \eqref{eq:inequalitycasmallercb} holds,  then the value of the game is
%\begin{equation}\label{eq:valueunequal}
%\frac{1}{2}-I\left(\frac{c_{B}}{c_{B}+nc_{A}},1,n\right).
%\end{equation}
%\end{enumerate}
%\end{theorem}
%
%\begin{center}
%FIGURE~\ref{fi:plotcaeqcb}  ABOUT HERE
%\end{center}

We mention the following consequence of Theorem~\ref{th:Nash} (see Figure~\ref{fi:plotcaeqcb}).

\begin{corollary}\label{co:equalblotto}
In the game $\mathcal{G}(m,n,c_{A},c_{B})$, if the two teams have equal strength (i.e., $c_{A}=c_{B}$), then the value is positive if $m>n$, namely, the team with more players has an advantage over the other team. Moreover, the value of the game is increasing in $m$ and decreasing in $n$.
\end{corollary}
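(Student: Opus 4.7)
The plan is to treat the two assertions of the corollary separately, drawing in both cases on Proposition~\ref{pr:kaminsky} together with the uniqueness of the best response in Theorem~\ref{th:Nash}\eqref{it:th:Nash-b}.

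For the positivity claim with $c_A=c_B$ and $m>n$, I first observe that $c_A=c_B$ trivially satisfies \eqref{eq:inequalitycasimcb}, so Theorem~\ref{th:Nash}\eqref{it:th:Nash-b} applies and the unique pure equilibrium is $a_i^*=c_A/m$ for $i\le m$ and $b_j^*=c_B/n$ for $j\le n$. To show strict positivity of the value, I would compare the equilibrium with the team-$A$ alternative $\widetilde{\boldsymbol{a}}=(c_A/n,\ldots,c_A/n,0,\ldots,0)\in\mathcal{A}(m,c_A)$, which uses only $n$ of its $m$ positions. By Proposition~\ref{pr:kaminsky} and $c_A=c_B$,
\[
G_{m,n}(\widetilde{\boldsymbol{a}},\boldsymbol{b}^*)=\mathbb{P}\Bigl(\sum_{i=1}^n X_i>\sum_{j=1}^n Y_j\Bigr)=\tfrac{1}{2},
\]
the last equality by the symmetry of two i.i.d.\ $\GAMMA(n,1)$ variables. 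Since $m>n$, the vector $\widetilde{\boldsymbol{a}}$ has zero entries while $\boldsymbol{a}^*$ does not, so the two are not permutations of one another; the uniqueness of the best response in Theorem~\ref{th:Nash}\eqref{it:th:Nash-b} then yields $G_{m,n}(\boldsymbol{a}^*,\boldsymbol{b}^*)>G_{m,n}(\widetilde{\boldsymbol{a}},\boldsymbol{b}^*)=1/2$, so the value is strictly positive.

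For the monotonicity claim the main ingredient is again Proposition~\ref{pr:kaminsky}: the probability $G_{m,n}(\boldsymbol{a},\boldsymbol{b})$ is unaffected by appending zero coordinates to $\boldsymbol{a}$ or $\boldsymbol{b}$, because zero coefficients contribute nothing to the sums. Thus every $\boldsymbol{a}\in\mathcal{A}(m,c_A)$ embeds into $\mathcal{A}(m+1,c_A)$ as $(\boldsymbol{a},0)$ with identical payoff against any $\boldsymbol{b}\in\mathcal{B}(n,c_B)$. Taking $\min$ over $\boldsymbol{b}$ and then $\max$ over team-$A$ strategies gives a value that weakly increases in $m$; the symmetric padding of $\mathcal{B}(n,c_B)$ into $\mathcal{B}(n+1,c_B)$ produces the opposite weak monotonicity in $n$. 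If strict monotonicity is intended in the equal-strength regime, the same uniqueness argument as in the positivity proof applies: the $(m+1)$-equilibrium for team $A$ spreads strength over all $m+1$ slots and hence is not an embedding of any $m$-player strategy, so the value strictly increases.

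The main obstacle is the strict inequality in the positivity claim. Weak monotonicity together with the symmetry at $m=n$ (which forces the value to be $0$) only delivers value $\ge 0$ for $m>n$; the strictness relies crucially on the uniqueness of the best response, which is the deep content of Theorem~\ref{th:minimizer}\eqref{it:th:minimizer-b} and Claim~\ref{cl:key} underlying Theorem~\ref{th:Nash}\eqref{it:th:Nash-b}.
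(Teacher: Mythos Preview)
Your proposal is correct and follows essentially the same approach as the paper: both argue that the team with more players can always mimic the smaller team by padding with zero-strength gladiators (giving weak monotonicity and value $\ge 0$ at $m>n$ via the symmetric choice $\widetilde{\boldsymbol{a}}=(c_A/n,\dots,c_A/n,0,\dots,0)$), and both upgrade to strict inequality by invoking the uniqueness of the equal-allocation best response from Theorem~\ref{th:Nash}\eqref{it:th:Nash-b}. Your write-up is more explicit than the paper's three-line proof, but the logic is the same.
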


\begin{proof}
The team with more players always has the option of not using them all.  Therefore it cannot be worse off than the team with fewer players.  However, since equal allocation is the unique best response, using them all is strictly better. The same argument proves the monotonicity in $m$ and $n$. Note that directly verifying this from the properties of the incomplete beta function appears nontrivial.
\end{proof}

\begin{center}
FIGURE~\ref{fi:plotcasimcb}  ABOUT HERE
\end{center}

Figure~\ref{fi:plotcasimcb} shows an interesting implication of Theorem~\ref{th:value}: team $B$ may be at a disadvantage even if $c_{A} < c_{B}$, and this happens if the number $n$ of its gladiators is much smaller than the number $m$ of gladiators in $A$. As the relative difference in strength between the two teams increases, it takes a larger number of gladiators to compensate for the lower strength.

\begin{center}
FIGURES~\ref{fi:plotvariouscb} AND  \ref{fi:plotvariousn} ABOUT HERE
\end{center}

As Figures~\ref{fi:plotvariouscb} and  \ref{fi:plotvariousn} show, if condition \eqref{eq:inequalitycasmallercb} holds, then team $A$ is at a strong disadvantage. The disadvantage increases with the total strength $c_{B}$ and the number $n$ of gladiators of team $B$. The number $m$ of gladiators of team $A$ is totally irrelevant, since, in equilibrium, the whole strength $c_{A}$ is assigned to only one gladiator.

\subsection{Related probability inequalities}\label{suse:additional}

If $X_{1}, \dots, X_{m}$, and $Y_{1},\dots,Y_{n}$ are i.i.d. random variables with $X_{1}\sim \Exp(1)$, and
\[
\bar{X} = \frac{1}{m}\sum_{i=1}^m X_i,\quad \bar{Y} = \frac{1}{n}\sum_{j=1}^n Y_j, \quad Z = \frac{m\bar{X}}{m\bar{X}+n\bar{Y}},
\]
then $Z$ has a $\BETA(m,n)$ distribution.  Hence
\begin{equation}\label{eq:conbeta}
\mathbb{P}(\bar{X} < \bar{Y})
=\mathbb{P}\left(Z < \frac{m}{m+n}\right)
= I\left(\frac{m}{m+n},m,n \right).
\end{equation}
For $m > n$, by Corollary~\ref{co:equalblotto}, we have
\begin{equation}\label{eq:means}
\mathbb{P}(\bar{X} < \bar{Y})<\frac{1}{2}.
\end{equation}
Since $\mathbb{E}[Z]=m/(m+n)$, \eqref{eq:means} is equivalent to $\mathbb{P}\left(Z < \mathbb{E}[Z]\right) < 1/2$, that is, $\mathbb{E}[Z] < \Med[Z]$.
This is a well known mean-median inequality for beta distributions \citep[see][]{GroMee:AS1977}.

The inequality \eqref{eq:means} has the following interesting statistical implication.
If two statisticians estimate the mean of
exponential variables, and use the sample mean as their unbiased
estimate, then the statistician with the larger sample tends to have
a larger (unbiased) estimate. If the two of them bet on
who has a larger estimate, the one with the larger sample tends to
win. For normal variables, or any symmetric variables, this clearly
cannot happen and  $\mathbb{P}(\bar{X} < \bar{Y})=1/2$.

Suppose now that the two statisticians share the first $n$
variables, that is, for $i=1,\dots,n$ we have $X_i=Y_i$, and the
remaining variables $X_{n+1},\dots,X_m$ are independent of the previous
ones. Then
\begin{align}\label{eq:medcom}
\mathbb{P}(\bar X < \bar Y)&=\mathbb{P}\left(\frac{1}{m}\left[\sum_{j=1}^n Y_j + \sum_{i=n+1}^m X_i\right] <
\frac{1}{n}\sum_{j=1}^n Y_j\right) \nonumber\\
&=\mathbb{P}\left(\frac{1}{m-n}\sum_{i=n+1}^m X_i
< \frac{1}{n}\sum_{j=1}^n Y_j\right).
\end{align}
By \eqref{eq:means} the last expression in \eqref{eq:medcom} is less than $1/2$ if
and only if $m-n >n$, that is,  $m>2n$. It equals $1/2$ if $m=2n$, and it is
larger than $1/2$ if $m<2n$, in which case \eqref{eq:means} is reversed. Thus in the
bet between the statisticians, if most of the variables are  in
common, the odds are against the one with the larger sample, contrary to the previous situation. This was noted by Abram Kagan.

Our main results can be presented in terms of various other distributional inequalities or monotonicity.
Using \eqref{eq:binombeta} and Corollary~\ref{co:equalblotto} we obtain further results that cannot easily be proved more directly.
We say that $X\sim\GAMMA(\alpha,\beta)$ if $X$ has a density
\[
f(x) = \frac{\beta^{\alpha}}{\Gamma(\alpha)}\expo^{-\beta x}x^{\alpha-1},\quad x>0.
\]

\begin{corollary} \label{co:betabin}
For $m,n$ integers the following properties hold:
\begin{enumerate}[\rm (a)]
\item\label{it:co:betabin-a}
The function
\[
I\left(\frac{m}{m+n},m,n\right)
\]
is decreasing in $m$ for fixed $n$, and increasing
in $n$ for fixed $m$.

\item\label{it:co:betabin-b}
Let $T \sim \BINOMIAL(m+n-1, m/(m+n))$. Then $\mathbb{P}(T
\ge m)$ is decreasing in $m$ and increasing in $n$.

\item\label{it:co:betabin-c}
Let $S \sim
\Poisson(m)$. Then $\mathbb{P}(S \ge m)$ is decreasing in $m$.

\item\label{it:co:betabin-d}
Let $R\sim \GAMMA(m,1)$. Then $\mathbb{P}(R \le m)$ is decreasing in $m$.
\end{enumerate}
\end{corollary}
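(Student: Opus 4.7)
The plan is to identify (a) as a direct consequence of Corollary~\ref{co:equalblotto}, after which (b)--(d) follow by successive reformulations. For (a), specialize $\mathcal{G}(m,n,c_A,c_B)$ to $c_A=c_B$. Since $n/(n-1)\geq 1$ for every $n\geq 1$, condition \eqref{eq:inequalitycasimcb} is automatically satisfied, so Theorem~\ref{th:value}\eqref{it:th:value-b} gives the value of the game as $\tfrac{1}{2}-I(m/(m+n),m,n)$. Corollary~\ref{co:equalblotto} asserts that this value is strictly increasing in $m$ and strictly decreasing in $n$, which is exactly the content of (a) after transposing the sign.

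For (b), substitute $x=m/(m+n)$ into the binomial form \eqref{eq:binombeta}, giving
\[
I\!\left(\frac{m}{m+n},m,n\right)=\sum_{j=m}^{m+n-1}\binom{m+n-1}{j}\!\left(\frac{m}{m+n}\right)^{\!j}\!\left(\frac{n}{m+n}\right)^{\!m+n-1-j}=\mathbb{P}(T\geq m),
\]
which turns (a) directly into (b). For (d), invoke the Poisson--Gamma duality: if $N(\cdot)$ is a unit-rate Poisson process, then its $m$-th arrival time $R$ has the $\GAMMA(m,1)$ distribution and $\{R\leq t\}=\{N(t)\geq m\}$. Setting $t=m$ yields $\mathbb{P}(R\leq m)=\mathbb{P}(S\geq m)$ for $S\sim\Poisson(m)$, so (d) is equivalent to (c).

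The only step that requires a short argument is (c), which I would obtain by letting $n\to\infty$ in (b). With $N_n:=m+n-1$ and $p_n:=m/(m+n)$ one has $p_n\to 0$ and $N_np_n\to m$, so by the Poisson limit theorem $T_{m,n}\sim\BINOMIAL(N_n,p_n)$ converges in distribution to $\Poisson(m)$. Pointwise convergence of the probability masses, combined with dominated convergence, gives $\mathbb{P}(T_{m,n}\geq m)\to\mathbb{P}(S_m\geq m)$ for $S_m\sim\Poisson(m)$, and similarly $\mathbb{P}(T_{m+1,n}\geq m+1)\to\mathbb{P}(S_{m+1}\geq m+1)$. Passing to the limit in the inequality $\mathbb{P}(T_{m,n}\geq m)\geq\mathbb{P}(T_{m+1,n}\geq m+1)$, valid for each $n$ by (b), yields the decreasing monotonicity claimed in (c). No step presents a genuine obstacle; the entire content of the corollary is carried by the game-theoretic monotonicity already established in Corollary~\ref{co:equalblotto}, and the analytic work reduces to translation between equivalent probabilistic formulations.
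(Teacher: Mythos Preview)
Your proposal is correct and follows essentially the same route as the paper: (a) is read off from Corollary~\ref{co:equalblotto} together with the value formula in Theorem~\ref{th:value}\eqref{it:th:value-b}, (b) is the binomial reformulation via \eqref{eq:binombeta}, (c) is the Poisson limit $n\to\infty$ of (b), and (d) is the Gamma--Poisson identity applied to (c). The only cosmetic difference is that the paper states (d) via the integral identity $\mathbb{P}(S\ge m)=\Gamma(m)^{-1}\int_0^m \expo^{-t}t^{m-1}\,\diff t$, whereas you phrase it through Poisson-process arrival times; these are of course the same fact.
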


\begin{proof}
\begin{itemize}
\item[\eqref{it:co:betabin-a}]
is a restatement of the last part of  Corollary~\ref{co:equalblotto}.

\item[\eqref{it:co:betabin-b}] follows from \eqref{it:co:betabin-a} and \eqref{eq:binombeta}.

\item[\eqref{it:co:betabin-c}] follows from \eqref{it:co:betabin-b} by letting $n \rightarrow \infty$.

\item[\eqref{it:co:betabin-d}]
follows from \eqref{it:co:betabin-c} and the identity
\[
\mathbb{P}(S \ge m) = \frac{1}{\Gamma(m)}\int_0^m \expo^{-t}\,t^{m-1} \diff t. \qedhere
\]
\end{itemize}
\end{proof}

We say that a random variable $Q\sim\Geom(p)$  if   $\mathbb{P}(Q_{1}=k)=(1-p)^k p$, $k=0,1,2,\dots$.

\begin{proposition}\label{pr:geometric}
Let $Q_{1}, \dots, Q_m$ be independent random variables such that $Q_{i}\sim\Geom(1/(1+a_{i}))$.  Define $Q =\sum_{i=1}^m Q_i$.
\begin{enumerate}[\rm (a)]
\item\label{it:co:Negbin-a}
We have
\begin{equation}\label{eq:geo0}
1-G_{m,n}(\boldsymbol{a},\boldsymbol{1}_n)=\mathbb{P}\left(Q \le n-1\right),
\end{equation}
where $\boldsymbol{a}=(a_1, \ldots, a_m)$ and $\boldsymbol{1}_n$ denotes the $n$-dimensional vector of ones.
\item\label{it:co:Negbin-b}
If $\sum_{i=1}^m a_i=n$,
then the probability in \eqref{eq:geo0} is minimized when all $a_i$'s are equal. In this case $Q_i$ are i.i.d. and $Q$ has a negative binomial distribution.
\item\label{it:co:Negbin-c}
If $\mathbb{E}[Q] = m$, then $\mathbb{E}[Q] > \Med[Q]$.
\end{enumerate}
\end{proposition}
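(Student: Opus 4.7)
For (a), I would interpret $Q_i$ as the number of $B$-gladiators that $A_i$ defeats before himself being killed, in an idealized scenario with an unlimited supply of unit-strength $B$-opponents. Because each duel is independent and the winner retains his strength, $A_i$ wins each successive duel with probability $a_i/(a_i+1)$ until his first loss, so $Q_i \sim \Geom(1/(1+a_i))$ under the paper's parametrization $\mathbb{P}(Q_i=k)=(1-p)^k p$ with $p=1/(1+a_i)$. Mutual independence of the $Q_i$ follows from the independence of successive duels together with the memoryless property. In the actual game team $A$ wins iff the collective kill count $Q = \sum_{i=1}^m Q_i$ produced in the unlimited-supply scenario is at least $n$, since the two versions agree on the moment at which either $n$ of the $B$'s have been killed or all $m$ of the $A$-gladiators have died; this yields \eqref{eq:geo0}. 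A fully formal variant of the same argument starts from \eqref{eq:defG} and identifies $Q_i$ with the Poisson increment $N(S_i)-N(S_{i-1})$, where $S_i=\sum_{j\le i}a_j X_j$ and $N$ is an independent rate-$1$ Poisson process whose inter-arrival times are the $Y_j$; mixing $\Poisson(a_i X_i)$ over $X_i\sim\Exp(1)$ reproduces the same geometric law.

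For (b), part (a) recasts the minimization of $\mathbb{P}(Q \le n-1)$ subject to $\sum_i a_i = n$ as the maximization of $G_{m,n}(\boldsymbol{a},\boldsymbol{1}_n)$, i.e., team $A$'s best response when team $B$ plays the uniform allocation $\boldsymbol{1}_n$. Here $c_A=c_B=n$, so the condition \eqref{eq:inequalitycasimcb} is trivially satisfied, and Theorem~\ref{th:Nash}\eqref{it:th:Nash-b} identifies $a_i^*=n/m$ for all $i$ as the unique maximizer. At that point the $Q_i$ become i.i.d.\ $\Geom(m/(m+n))$ and $Q$ has the advertised negative binomial distribution.

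For (c), observe that $\mathbb{E}[Q_i]=a_i$, so the assumption $\mathbb{E}[Q]=m$ reads $\sum_i a_i = m$. Applying (a) with $n$ replaced by $m$ gives $\mathbb{P}(Q\le m-1)=1-G_{m,m}(\boldsymbol{a},\boldsymbol{1}_m)$. By symmetry $G_{m,m}(\boldsymbol{1}_m,\boldsymbol{1}_m)=1/2$, and by the uniqueness in (b) equal allocation is team $A$'s unique best response to $\boldsymbol{1}_m$, whence $G_{m,m}(\boldsymbol{a},\boldsymbol{1}_m)\le 1/2$ for every feasible $\boldsymbol{a}$. Hence $\mathbb{P}(Q\le m-1)\ge 1/2$, which shows that $m-1$ is a median of $Q$; in particular $\Med[Q]\le m-1 < m=\mathbb{E}[Q]$.

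The main (modest) obstacle lies in part (a): one must pin down the geometric parametrization correctly and verify that truncating the unlimited-supply battle at the first moment $Q$ reaches $n$ does not alter the identity of the winner. Once (a) is secured, parts (b) and (c) follow at once from the equilibrium structure already established in Theorem~\ref{th:Nash}.
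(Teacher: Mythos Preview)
Your argument is correct and mirrors the paper's proof in all three parts. The only cosmetic difference is in (b): the paper invokes Theorem~\ref{th:minimizer}\eqref{it:th:minimizer-b} directly, whereas you route through Theorem~\ref{th:Nash}\eqref{it:th:Nash-b}, which is itself proved via Theorem~\ref{th:minimizer}\eqref{it:th:minimizer-b}; and in (c) the paper concludes via $\mathbb{P}(Q\le m)>1/2$ while you (slightly more cleanly) stop at $\mathbb{P}(Q\le m-1)\ge 1/2$. One small wording fix: from $\mathbb{P}(Q\le m-1)\ge 1/2$ you can conclude $\Med[Q]\le m-1$, but not that $m-1$ \emph{is} a median, since you have not checked $\mathbb{P}(Q\ge m-1)\ge 1/2$; the conclusion $\Med[Q]<m$ is unaffected.
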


\begin{proof}
\begin{itemize}
\item[\eqref{it:co:Negbin-a}]
The relation \eqref{eq:geo0} can be explained directly: team $A$ loses if all its gladiators together
defeat at most $n-1$ opponents. Gladiator $i$ from team $A$ defeats a geometric random number, $Q_i$, of gladiators
of strength 1 from team $B$ since he fights until he loses, and he loses a fight with probability $1/(1+a_{i})$. Thus
if $\sum_{i=1}^m Q_i \le n-1$, then team $A$ defeats  at most $n-1$ gladiators altogether, and loses.

\item[\eqref{it:co:Negbin-b}]
This follows directly from Theorem \ref{th:minimizer}.

\item[\eqref{it:co:Negbin-c}]
Note that $\mathbb{E}[Q] = \sum_{i=1}^m a_i$.
Letting $n=m$, and using \eqref{eq:geo0} and part \eqref{it:co:Negbin-b}, we conclude that $\mathbb{P}(Q\le n-1)\geq 1-G_{m,n}(\boldsymbol{1}_m,\boldsymbol{1}_n)=1/2$.  We obtain
$\mathbb{P}(Q \le \mathbb{E}[Q])=\mathbb{P}(Q \le n)>1/2$, and therefore $\mathbb{E}[Q] > \Med[Q]$. \qedhere
\end{itemize}
\end{proof}

 \section{Comments and extensions}\label{se:extensions}

The probability in \eqref{eq:aoveraplusb} is a particular example of \emph{contest success function}\footnote{\citet{Hir:PC1989} calls it technology of conflict}. The following more general class was considered by
\citet{Tul:TAMUP1980}
with the purpose of studying efficient rent seeking:
\begin{equation}\label{eq:csf}
h_{\gamma}(a,b) = \frac{a^{\gamma}}{a^{\gamma}+b^{\gamma}}, \quad \gamma > 0.
\end{equation}
These functions have been studied, axiomatized, and widely used in different fields \citep[see, e.g.,][and many others]{Ska:ET1996, Szy:JEL2003, CorDah:ET2010}. The reader is referred to
\citet{Cor:RED2007}, \citet{GarSka:HDE2007}, \citet{Kon:OUP2009}
for surveys on this topic.

In \eqref{eq:csf}, when $\gamma \to \infty$, then
\[
h_{\gamma}(a,b) \to h_{\infty}(a,b):=
\begin{cases}
1 & \text{if $a>b$}, \\
\frac{1}{2} & \text{if $a=b$}, \\
0 & \text{if $a<b$}.
\end{cases}
\]
This case corresponds to a classical Colonel Blotto situation where the stronger gladiator always wins. If the contest success function $h_{\infty}$ is used in our game, then any equilibrium strategy for the stronger team assigns the whole strength to one single gladiator, and, for $c_{A} < c_{B}$,  team $A$  loses with probability one and value of the game is $-1/2$.

In \eqref{eq:csf}, when $\gamma \to 0$, then
\[
h_{\gamma}(a,b) \to h_{0}(a,b):=
\begin{cases}
1 & \text{if $a>b=0$}, \\
\frac{1}{2} & \text{if $a, b > 0$}, \\
0 & \text{if $0=a<b$}.
\end{cases}
\]
When $h_{0}$ is used as a contest success function in our game, then any equilibrium strategy assigns positive strength to every gladiator, therefore  in each  fight either gladiator wins with probability $1/2$ and the game reduces to one with two teams of $m$ and $n$ gladiators respectively, all having equal power. Then, using  \eqref{eq:defG}, and \eqref{eq:Pincompletebeta} we see that the probability that team $A$ wins is equal to 
\[
G_{m,n}(\boldsymbol{1},\boldsymbol{1})=1- I\left(\frac{1}{2}, m,n\right). 
\]
If $a_{1} = \dots = a_{m} = 1$, then in  \eqref{eq:geo0} the random variable $Q$ is negative binomial. Hence it is easy to see that 
\[
G_{m,n}(\boldsymbol{1},\boldsymbol{1})=\sum_{j=0}^{m-1} \left(\frac{1}{2}\right)^{n+j} \binom{n+j-1}{j},\]
and the value of the game is obtained by subtracting $1/2$.

If the extreme cases $\gamma = 0$ and $\gamma = \infty$ are easy to analyze, and the case $\gamma = 1$ required hard calculations, the remaining  cases, i.e., $\gamma \not \in \{0, 1, \infty\}$ look prohibitive in our model. They were considered in easier to deal frameworks by some authors.
For instance, in a context of rent-seeking, when a contest success function of type \eqref{eq:csf} is used,
\citet{AlcDah:JPE2010} show that for $\gamma \ge 2$ the structure of the equilibrium is always the same.

\citet{Fri:OR1958} and
\citet{Rob:ANUmimeo2005} consider the case $\gamma=1$ in a static simultaneous battle context similar to  the classical Colonel Blotto model and show that the equilibrium strategies for both players involve splitting strength evenly across all the battlefields.
\citet{Rob:ET2006} considers the case $\gamma = \infty$ and shows that the equilibrium mixed strategy of the stronger player stochastically assigns positive strength to each battlefield, whereas the one of the weaker player gives zero strength to some randomly selected battlefields and randomly distributes the strength among the remaining fields.
These results bear some analogy with the structure of the equilibrium in our game.

\citet{TanShoLin:AI2010} consider contest games where the strengths of players are exogenously given and coaches simultaneously choose the order of players and then players with the corresponding position fight.
This model was used by
\citet{Ara:mimeo2009}.

\section*{Acknowledgments}
The gladiator game of \citet{KamLukNel:AJS1984} was pointed to us by Gil Ben Zvi. We are grateful to Sergiu Hart, Pierpaolo Brutti,  Abram Kagan, Paolo Giulietti, Chris Peterson, and Andreas Hefti  for their interest and excellent advice. We thank two referees and an associate editor for their insightful comments.

\bibliographystyle{artbibst}
\bibliography{bibgladiator}

\newpage

\section*{Figures}

\begin{figure}[h]
\centering
\includegraphics[width=15cm, height=7cm]{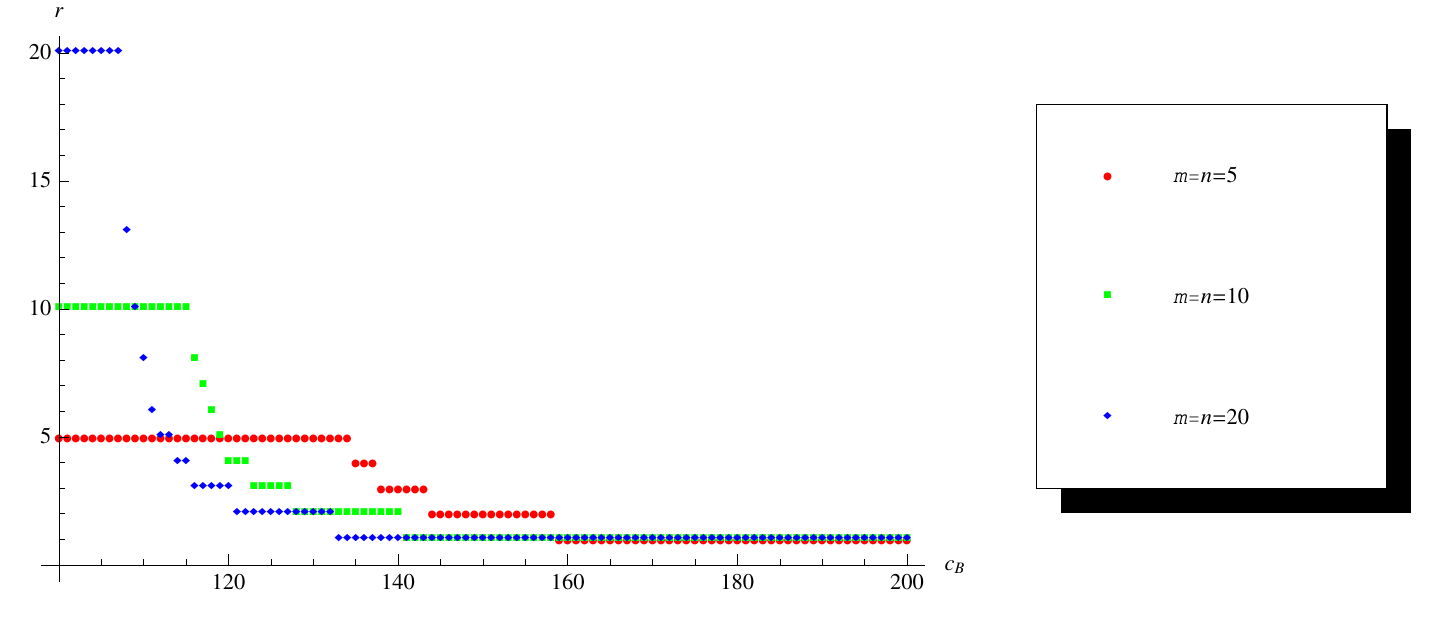}
~\vspace{0cm} \caption{\label{fi:plotmeqn} Number of positive $a_{i}^{*}$ as a function of $c_{B}$ for $c_{A}=100$ and various $m=n$.}
\end{figure}

\begin{figure}[h]
\centering
\includegraphics[width=15cm, height=7cm]{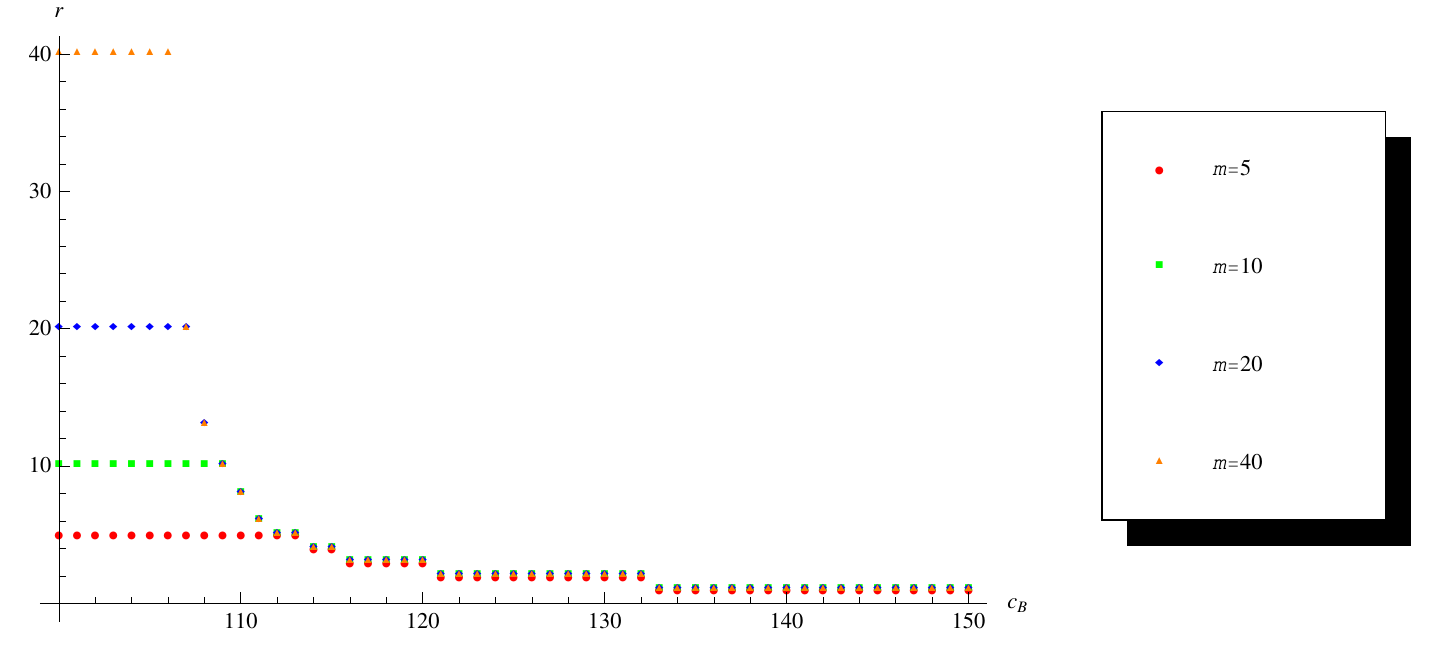}
~\vspace{0cm} \caption{\label{fi:plotfixedn} Number of positive $a_{i}^{*}$ as a function of $c_{B}$ for $c_{A}=100$, $n=20$, and various $m$.}
\end{figure}

\begin{figure}[h]
\centering
\includegraphics[width=15cm, height=7cm]{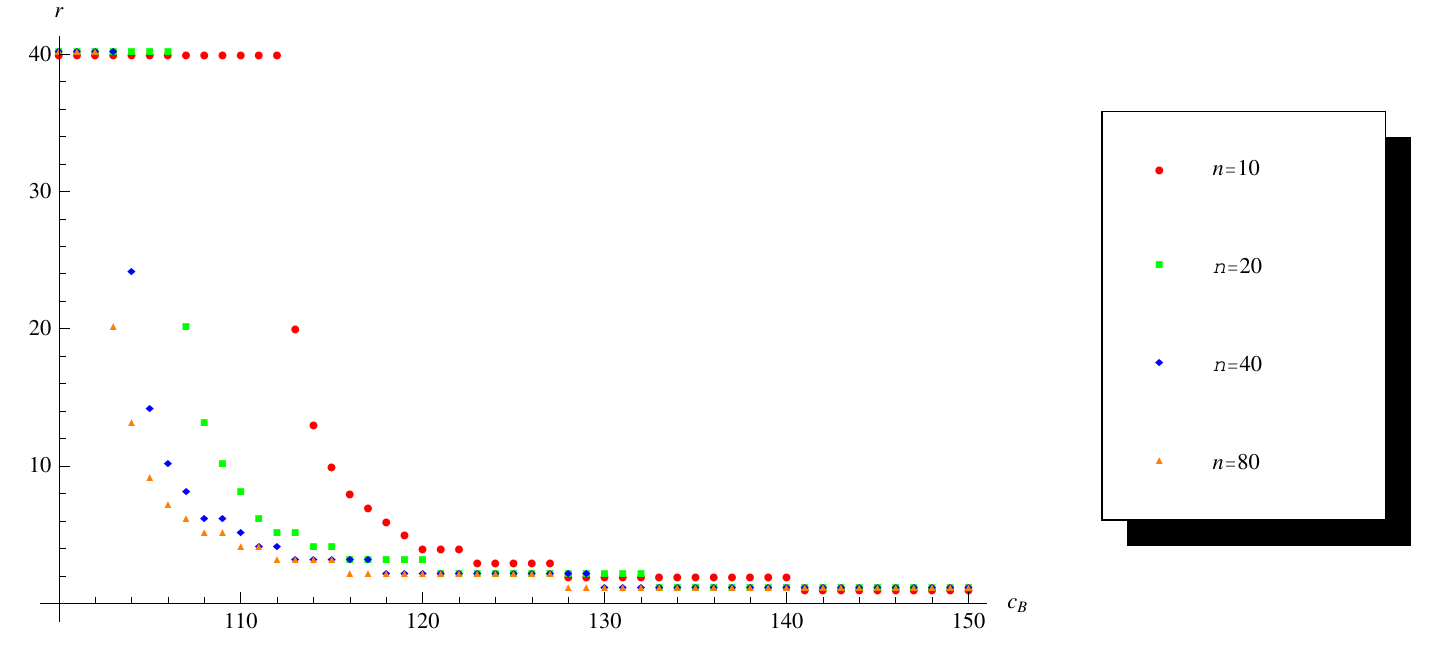}
~\vspace{0cm} \caption{\label{fi:plotfixedm} Number of positive $a_{i}^{*}$ as a function of $c_{B}$ for $c_{A}=100$, $m=40$, and various $n$.}
\end{figure}

\begin{figure}[h]
\centering
\includegraphics[width=15cm, height=7cm]{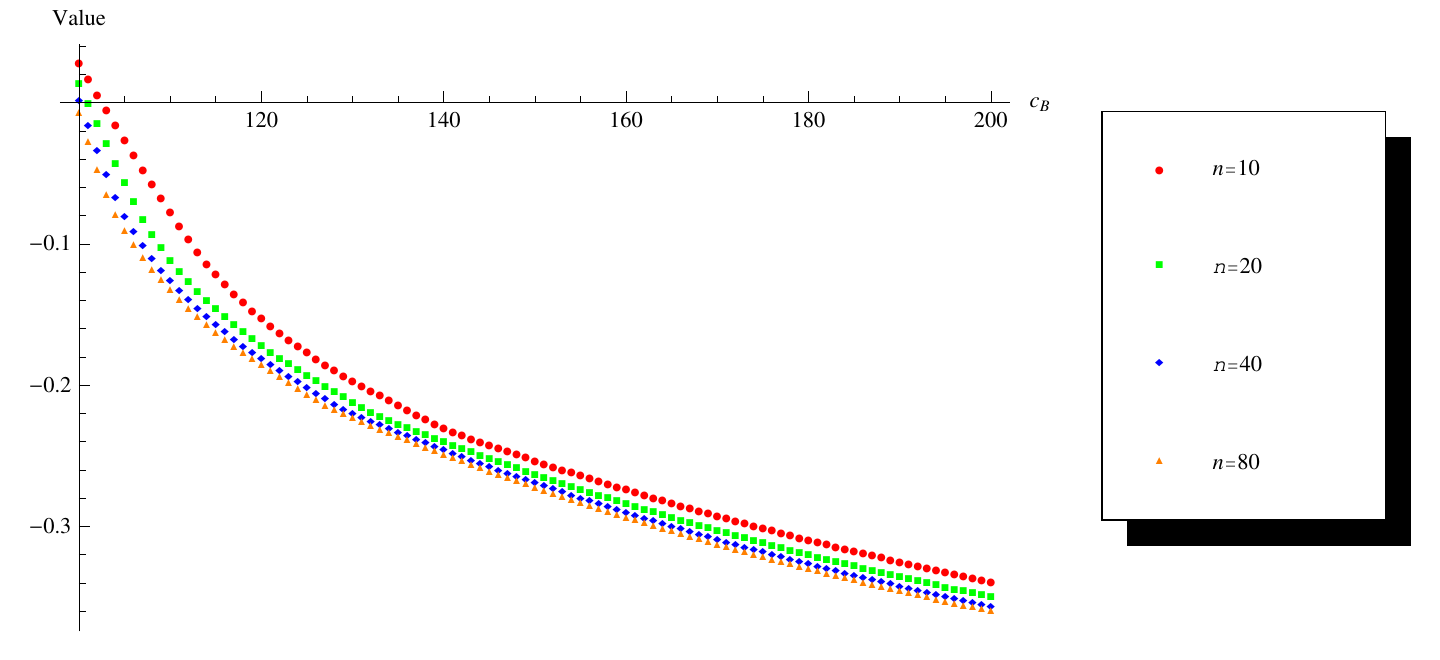}
~\vspace{0cm} \caption{\label{fi:plotvariousr} Value of $\mathcal{G}$ as a function of $c_{B}\in[100,200]$  for $c_{A}=100$, $m=40$, and various $n$.}
\end{figure}

\begin{figure}[h]
\centering
\includegraphics[width=15cm, height=7cm]{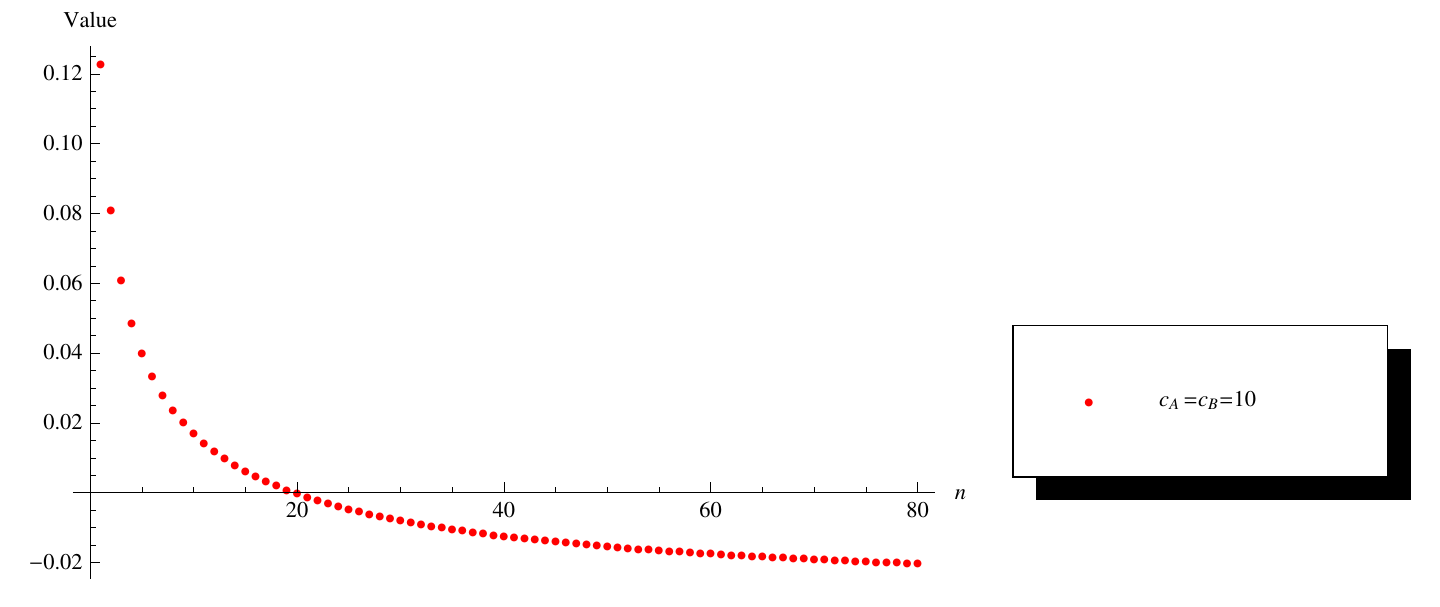}
~\vspace{0cm} \caption{\label{fi:plotcaeqcb} Value of $\mathcal{G}$ as a function of $n$ for $m=20$ and  $(c_{A}=c_{B})$.}
\end{figure}

\begin{figure}[h]
\centering
\includegraphics[width=15cm, height=7cm]{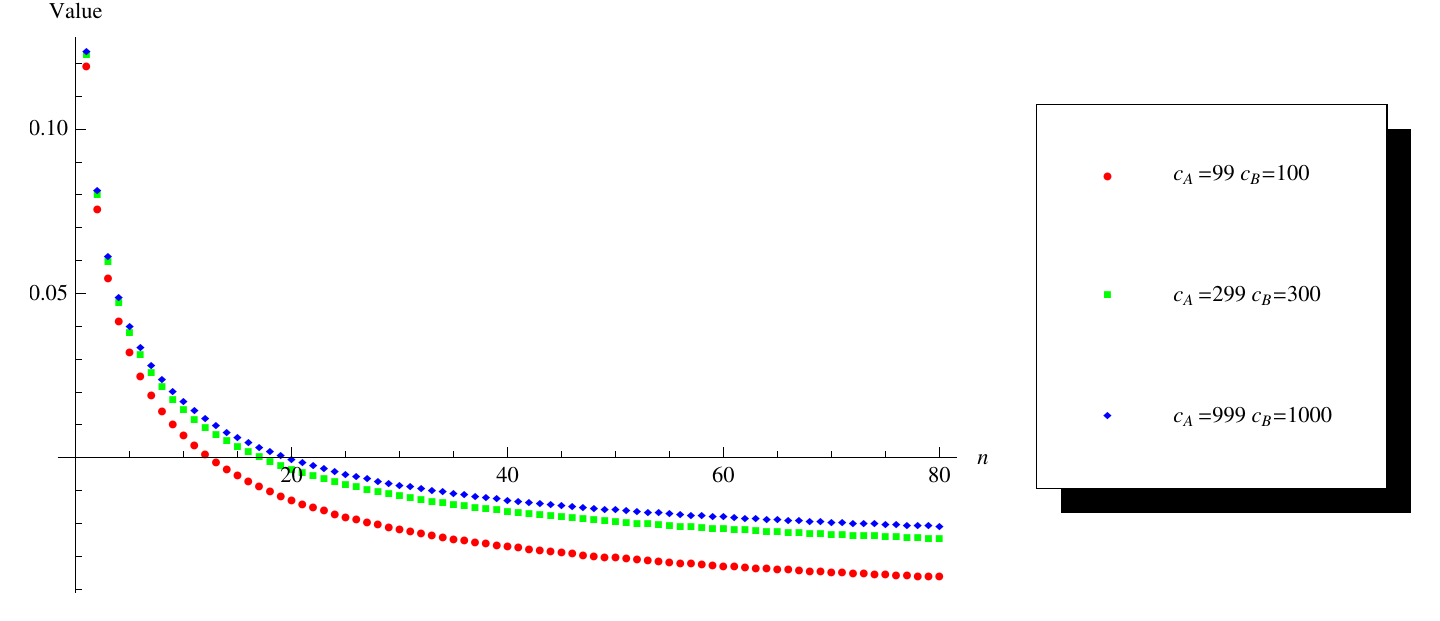}
~\vspace{0cm} \caption{\label{fi:plotcasimcb} Value of $\mathcal{G}$ as a function of $n$ for $m=20$ and different pairs $(c_{A},c_{B})$.}
\end{figure}

\begin{figure}[h]
\centering
\includegraphics[width=15cm, height=7cm]{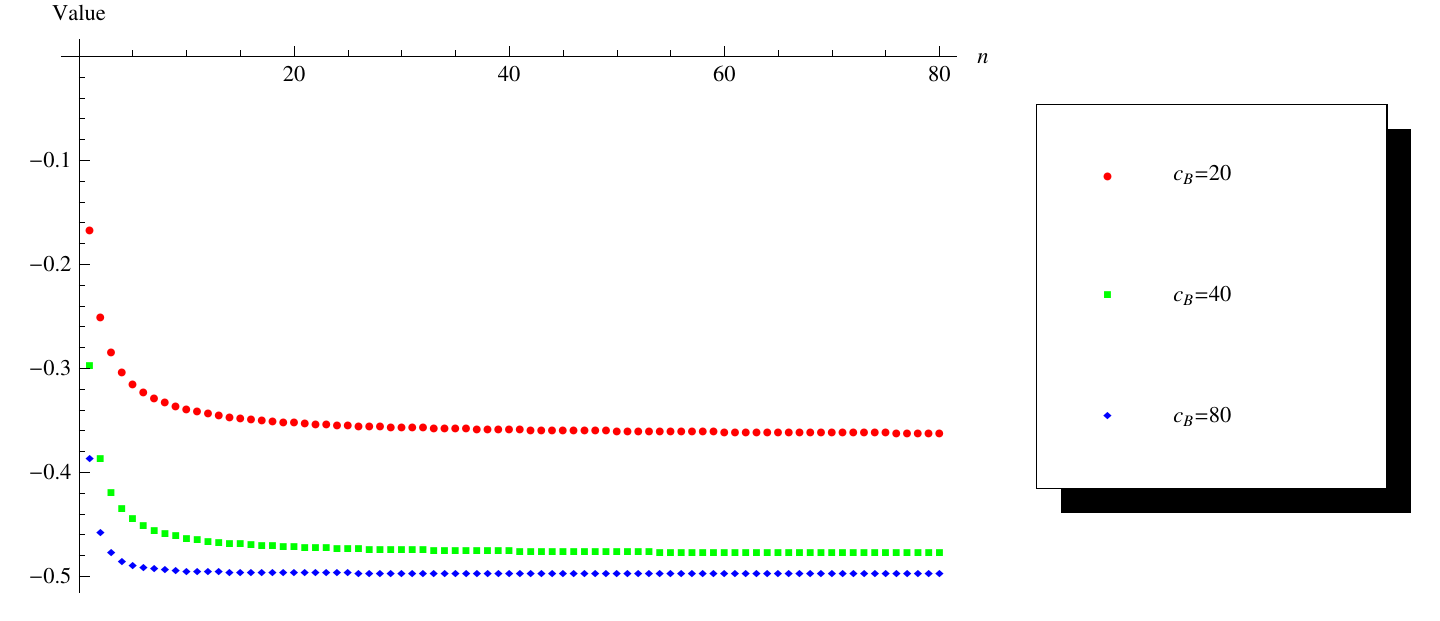}
~\vspace{0cm} \caption{\label{fi:plotvariouscb} Value of $\mathcal{G}$ as a function of $n$  for $c_{A}=10$ and various $c_{B}$.}
\end{figure}

\begin{figure}[h]
\centering
\includegraphics[width=15cm, height=7cm]{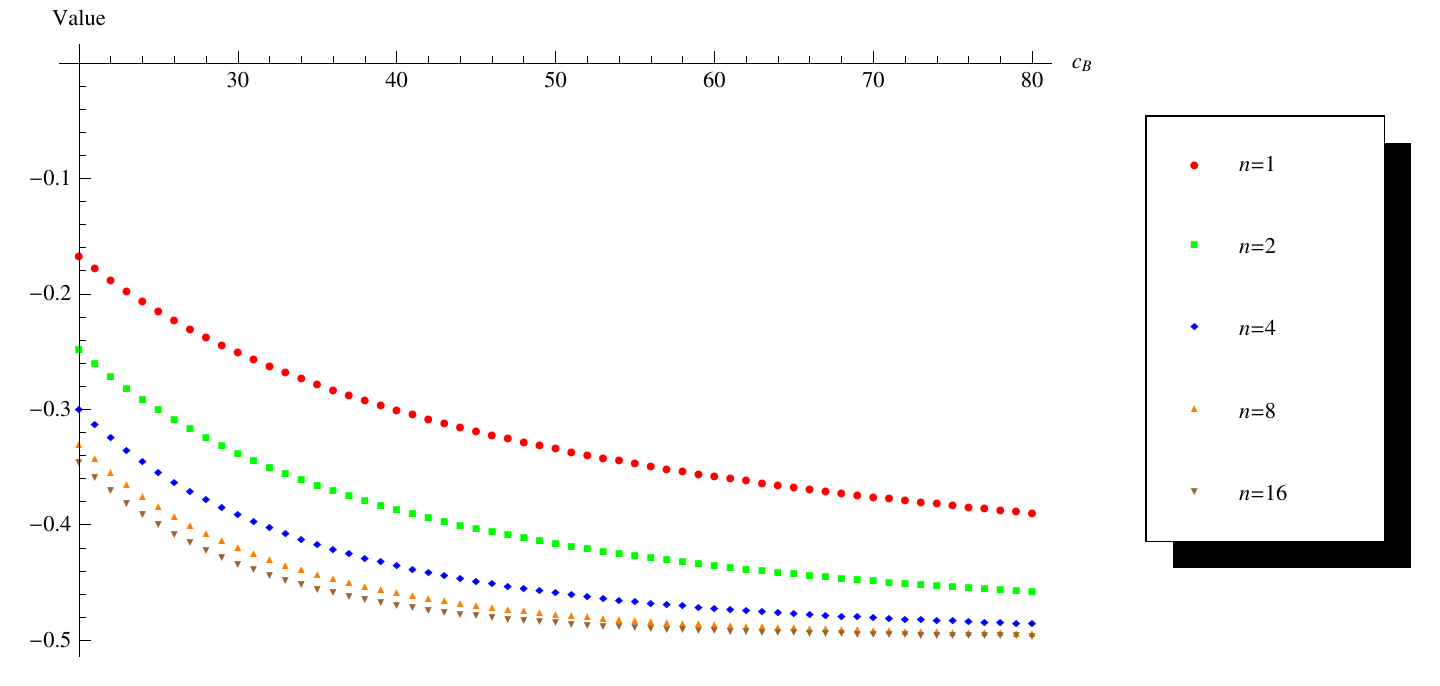}
~\vspace{0cm} \caption{\label{fi:plotvariousn} Value of $\mathcal{G}$ as a function of $c_{B} \ge 20$  for $c_{A}=10$ and various $n$.}
\end{figure}

\end{document}